\documentclass[11pt]{amsart}

\usepackage[percent]{overpic}
\usepackage{graphicx,subfigure,amsmath,amssymb,amsfonts,bm,epsfig,epsf,url,dsfont,tcolorbox,bbm,microtype}
\usepackage{amsthm,mathrsfs}
\usepackage{tikz}
\usepackage{multirow}
\usepackage{times}
\usepackage{booktabs}
\usepackage{cases}
\usepackage{enumitem}
\usepackage[font=footnotesize]{subcaption}
\makeatletter
\renewcommand{\@captionfont}{\footnotesize}
\makeatother

\usepackage{tikz}
\usepackage{marginnote}

\usepackage{fancybox}

\usepackage{xcolor}

\usepackage[margin=1.15in]{geometry}
\usepackage{hyperref}
\hypersetup{
    colorlinks,
    linkcolor={blue!80!black},
    citecolor={red!80!black},
    urlcolor={blue!80!black}
}

\newcommand{\graybox}[1]{%
\begin{tcolorbox}[
  width=\linewidth,
  left=1mm,
  right=1mm,
  boxsep=0pt,
  boxrule=0.5pt,
  sharp corners,
  colback=white!95!black]
#1
\end{tcolorbox}
}

\usetikzlibrary{shapes.geometric}

\theoremstyle{plain}
\newtheorem{theorem}{Theorem}[section]
\newtheorem{lemma}[theorem]{Lemma}

\newtheorem{corollary}[theorem]{Corollary}
\theoremstyle{remark}
\newtheorem{remark}[theorem]{Remark}

\newcommand{\R}{\mathbb{R}}
\newcommand{\C}{\mathbb{C}}

\newcommand{\E}{\mathbb{E}}
\newcommand{\la}{\langle}
\newcommand{\ra}{\rangle}
\newcommand{\tr}{\operatorname{tr}}
\newcommand{\Stab}{\operatorname{Stab}}
\newcommand{\cH}{\mathcal H}
\newcommand{\cR}{\mathcal R}
\newcommand{\Kt}{\tilde K}
\newcommand{\lmax}{\lambda_{\max}}
\newcommand{\wt}{\operatorname{wt}}

\makeatletter
\renewcommand{\@makefnmark}{\hbox{\@textsuperscript{\normalfont\@thefnmark}}}
\makeatother

\usepackage{etoolbox}

\makeatletter
\AtEndEnvironment{remark}{\hfill$\triangleleft$}
\makeatother
\title[Improvements of the MRRW bounds]{Comments on the recent improvements of the MRRW bounds}
\author{Alexander Barg}\address{University of Maryland}\email{abarg@umd.edu}
\date{}

\begin{document}

\begin{abstract} The asymptotic McEliece–Rodemich–Rumsey–Welch bound (1977) limits the largest attainable rate of binary codes as a function of the relative distance. After a nearly half-century hiatus, this result was recently improved in two concurrent works, by OpenAI and by O. Alrabiah and V. Guruswami. The two arguments look entirely different, a Delsarte certificate on the one hand, a classical–quantum channel and the pretty good measurement on the other, and they yield the same bound. The purpose of this note is to explain why: in both proofs, a subspace is attached to every codeword and moved with it, and the bound counts how many such subspaces fit in the ambient space, exactly in the first case and in the probabilistic sense of typicality in the second. We also present the OpenAI proof in the language and context of coding theory, as an extension of the spectral method in which the single vector attached to a codeword is replaced by a subspace.\end{abstract}

\maketitle

{\small{\tableofcontents}}

\section{Introduction}
The paper of McEliece-Rodemich-Rumsey-Welch \cite{MRRW77}, contains two related results on the asymptotic rate of binary codes and of binary constant-weight codes\footnote{This note is written as a set of comments, and we assume that the readers are familiar with the general problem of asymptotic upper bounds on the rate of codes.}. The proofs are based on the same idea, applied in the Hamming association scheme and the Johnson scheme. The constant-weight bound can be translated to the Hamming space either by the Bassalygo-Elias inequality or by its functional analog known as Rodemich's theorem. The resulting bounds on codes in the Hamming space are usually called MRRW-1 and MRRW-2. The second of them is strictly better than the first for relative distances $0<\delta< 0.273$ (the upper limit is approximate) and coincides with the first for larger $\delta$. These bounds have been the best known results for nearly a half-century since their appearance. In the meantime, they were proved in a number of different ways, but all attempts to surpass them have been unsuccessful --- until recently (August 2026), when two concurrent works \cite{OAI26}, \cite{AG26} presented a small but conceptually important improvement of these estimates. The manuscript \cite{OAI26} in addition contains improvements of the Kabatiansky-Levenshtein upper bound on the asymptotic rate of spherical codes and the exponent of the packing density of $\R^n$, and the second work \cite{AG26} proves a new upper bound on the rate of low-density parity-check codes. Both works also contain extensive discussions of the background and related results. 

In this note we focus on the MRRW-1 bound. While it is not the best result numerically, the concepts behind the improvements are very similar, and the Johnson case is technically more involved, so the MRRW-1 case suffices to highlight the new ideas behind the proof. What is more, the improvement of the Kabatiansky--Levenshtein bound on the rate of spherical codes in \cite{OAI26} also relies on similar ideas.

This note claims no novelty beyond the presentation and a few contextual remarks. 
We present the proof of the bound in \cite{OAI26} building up from the one-dimensional case of \cite{BN06}: as it turns out, once the details there are explained, the generalization is easy to process. Our perspective also reveals connections between the arguments underlying the Delsarte-type proof of \cite{OAI26} and the decoding-based proof for classical-quantum channels \cite{AG26}, discussed in Sec.~\ref{sec: quantum}.

{\bf How the MRRW bound was improved:} 
 In a nutshell, the main idea can be expressed as the slogan ``move from packing vectors to packing subspaces, and move the subspace,'' in the sense discussed in detail in the main text.
 Interestingly, the same slogan applies to both proofs \cite{OAI26,AG26}---and, even more remarkably, the subspaces are closely related!
 
 More formally, the zonal vector at a point $x$ spans the unique line in each Fourier level fixed by the stabilizer of $x$. In the classical construction, one attaches to a codeword a weighted combination of these lines. The new construction replaces the trivial stabilizer representation by a nontrivial one, namely, the space of degree-$k$ Boolean harmonics, embedding an isometric copy of it into every retained level. This subspace moves equivariantly
 with the codeword, and it is of exponentially large dimension, accounting for the asymptotic improvement. At the functional level, the construction replaces the squared Christoffel–Darboux kernel formed of zonal vectors (equivalently, the trace kernel of the projections onto the attached lines) by the trace kernel of the projections onto the moving subspaces.

 \section{The result}\label{sec:result}

Given a code $C\subset \{0,1\}^n$, its {\em rate} and {\em distance} are defined as $R(C)=\frac{\log_2|C|}{n}$ and $d(C)=\min_{x,y\in C} d_H(x,y)$, where
the minimum is over all pairs of distinct vectors and $d_H$ denotes the Hamming distance. Let $A_2(n,d)=\max_{C:d(C)\ge d}|C|$ be the largest size of a code with distance $d=\delta n$. Let
   $$
 R_2(\delta)=\limsup_{n\to\infty}\tfrac1n\log_2A_2(n,\lceil\delta n\rceil)
   $$
be the largest attainable rate of a sequence of codes of growing length $n$ with {\em relative distance} $\delta$.
By the Gilbert--Varshamov bound,
   $$
   R_2(\delta)\ge 1-h_2(\delta), \quad 0<\delta<1/2.
   $$
The MRRW-1 bound further states that
   $$
   R_2(\delta)\le M_1(\delta):=h_2\Bigl(\tfrac12-\sqrt{\delta(1-\delta)}\Bigr).  
   $$

\vspace*{.2in}
The new bounds of \cite{OAI26} and \cite{AG26} are cited next.
\graybox
{{\em Theorem} (\cite[Eq.~(2)]{OAI26}) 
Let
   \begin{equation}\label{eq:OAI improved}
 \kappa_H(\delta)=\!\!\inf_{\substack{0\le b<a\le1/2\\ \Gamma_H(a,b)>1-2\delta}}\!\!
 \bigl(h_2(a)-h_2(b)\bigr),  \qquad
        \Gamma_H(a,b):=\frac{2(a-b)(1-a-b)}{\sqrt{a(1-a)}}.
   \end{equation}
Then $R_2(\delta)\le\kappa_H(\delta)<M_1(\delta)$ for every $\delta\in(0,1/2)$.}

\graybox
{{\em Theorem} (\cite[Theorem 2]{AG26}) Let
\begin{equation}\label{eq:AG-improved}
 R_{\mathrm{MQC}}(\delta)=\min_{0\le r\le\delta}
 \Bigl[\varsigma\bigl((1-2r)^2\bigr)-\varsigma\bigl(2(\delta-r)(1-2r)\bigr)\Bigr],
 \qquad \varsigma(D):=h_2\Bigl(\tfrac{1-\sqrt{1-D}}2\Bigr).
\end{equation}
Then $R_2(\delta)\le R_{\mathrm{MQC}}(\delta)<M_1(\delta)$ for every $\delta\in(0,1/2)$.}

First, we observe that these two results are identical:  
   \begin{equation}\label{eq: HS}
 R_{\mathrm{MQC}}(\delta)=\kappa_H(\delta),\qquad0<\delta<\tfrac12.
\end{equation}
Indeed, substitute $a=\frac12-\sqrt{r(1-r)}$ and $b=\bigl(1-\sqrt{1-2(\delta-r)(1-2r)}\bigr)/2$ in \eqref{eq:AG-improved}, so that the 
function under the minimum in \eqref{eq:AG-improved} becomes $h_2(a)-h_2(b)$. Writing $a=\frac12-\alpha$, $b=\frac12-\beta$, one has
$a-b=\beta-\alpha$, $1-a-b=\beta+\alpha$, $a(1-a)=(1-2r)^2/4$, and
\[
 4(\beta^2-\alpha^2)
 =1-2(\delta-r)(1-2r)-4r(1-r)
 =(1-2r)(1-2\delta),
\]
whence
\[
 \Gamma_H(a,b)=\frac{2(a-b)(1-a-b)}{\sqrt{a(1-a)}}
 =\frac{4(\beta^2-\alpha^2)}{1-2r}=1-2\delta.
\]
Thus, the MQC bound traverses the boundary curve $\Gamma_H(a,b)=1-2\delta$ of the OAI optimization problem \eqref{eq:OAI improved} with the same objective function. At the  endpoints we obtain MRRW-1 and the Bassalygo--Elias (BE) bound\footnote{Thus, the BE bound  and the MRRW-1 bound arise as two parameter settings of the same Delsarte certificate, which has not previously appeared within Delsarte's approach. Previous proofs of the BE bound using linear programming relied on a separate construction \cite[pp.~147-149]{Aaltonen1981} or on Rodemich's theorem \cite{delsarte1994application}.} \cite{Bassalygo1965}, \cite[p.93]{HuffmanPless2003}:
   \begin{equation}
\text{($r=\delta\leftrightarrow b=0$, MRRW--1);\qquad  ($r=0\leftrightarrow a=\frac12$,
BE)}.
  \end{equation}
Moreover, at $b=0$, we find that $\Gamma_H(a,0)=2\sqrt{a(1-a)}$, and the constraint becomes
the classical one since $\Gamma_H(a,0)> 1-2\delta$ is equivalent to $a>\frac12-\sqrt{\delta(1-\delta)}$. 
This implies that $\kappa_H(\delta)\le M_1(\delta)$ with equality only on the 
boundary\footnote{It is easily seen that the infimum in \eqref{eq:OAI improved} is generally attained for $b>0$.} $b=0$. 
It is also easily verified that for a fixed $\delta\in(0,1/2)$ the infimum in \eqref{eq:OAI improved} is attained 
on the boundary curve $\Gamma_H(a,b)=1-2\delta$, which implies the equality \eqref{eq: HS}.

The conclusion in \eqref{eq: HS} is not merely a coincidence; we expose the reasons in Sec.~\ref{sec: quantum} after introducing the harmonic-analytic framework where it fits properly.

{\footnotesize\begin{table}[h]
\centering
\begin{tabular}{c|cccccccc}
\toprule
 $\delta$ & 0.10 & 0.15 & 0.20 & 0.25 & 0.30 & 0.35 & 0.40 & 0.45 \\
\midrule
$\min(M_1(\delta),R_{\mathrm{BE}}(\delta))$
 & 0.701882 & 0.591857 & 0.468996 & 0.354579
& 0.250225 & 0.158133 & 0.0814689 & 0.0252661 \\
$R_{\mathrm{MQC}}(\delta)$
 & 0.699832 & 0.577921 & 0.460900 & 0.350379
& 0.248376 & 0.157506 & 0.0813366 & 0.0252574 \\
\bottomrule
\end{tabular}
\medskip
\caption{The MRRW-1 bound $M_1(\delta)=h_2\bigl(\frac12-\sqrt{\delta(1-\delta)}\bigr)$
and the improved bound \eqref{eq:OAI improved}-\eqref{eq:AG-improved}.
Note that for $\delta\le 0.0744$ the minimum in \eqref{eq:AG-improved} is attained 
for $r=0$, and $R_{\mathrm{MQC}}(\delta)=R_{\mathrm{BE}}(\delta)$.}
\end{table}
}

\section{Preliminaries}\label{sec:setup}
We begin with introducing the notation used in the constructions below. We will write the Hamming space in the multiplicative notation,  $X_n=\{\pm1\}^n$, and think of it as a group under coordinatewise multiplication. Its characters\footnote{\cite{OAI26} denotes characters both as $\chi_S$ and $e_S$, which creates some ambiguity.} have the form $\chi_S(z)=\prod_{r\in S}z_r, S\subset[n]$ (also called Walsh functions), and they form a basis of the space of functions $\R^{X_n}$. Under the translation action of $X_n$, this space decomposes into Fourier levels as 
   $$\R^{X_n}=\bigoplus_{i=0}^{n}V_i, \qquad  V_i=\operatorname{span}\{\chi_S:|S|=i\},
   $$
   where
   \[
D_i=\dim V_i=\binom ni.
\]
Below we use the normalized inner product on $\R^{X_n}$ given by
$\la f,g\ra={\mathbb E}_z\,f(z)g(z)=2^{-n}\sum_{z\in X_n}f(z)g(z)$, for which
the characters form an orthonormal basis.
We write $\chi_r$ for $\chi_{\{r\}}$, the $r$-th coordinate function, so $\chi_r(z)=z_r$ and
$\sum_r \chi_r= n t(\mathbf{1}, \cdot)$, where $t(x,y)$ is defined in \eqref{eq: distance}.

The isometry group of $X_n$ is $B_n=\{\pm1\}^n\rtimes S_n$; the $V_i$ are its isotypic components\footnote{they supply the global index, $i=an$, so also the parameter $a$.}, and the stabilizer of a point is $\Stab(x)\cong S_n$.

In the multiplicative notation, the distance turns into an inner product. Let $W=\R^n$ with standard basis 
$(e_1,\dots,e_n)$ and $\ell_x:=x/\sqrt n=n^{-1/2}\sum_r x_re_r$. Next, define
      \begin{equation}\label{eq: distance}   
 t(x,y):=\la\ell_x,\ell_y\ra=1-\frac{2\,d_H(x,y)}{n},\qquad s=1-2\delta ,
      \end{equation}
where $s$ is the corresponding relative parameter.

We will use a pair of adjoint operators, called \emph{raising and lowering}, defined on the basis as follows:
  $$
\partial\chi_S=\sum_{r\in S}\chi_{S\setminus\{r\}}, \qquad \partial^\ast\chi_S=\sum_{r\not\in S}
  \chi_{S\cup \{r\}}.
   $$
Note that
  \begin{equation}\label{eq: partial}
 \partial+\partial^{*}=\text{multiplication by }\textstyle\sum_r \chi_r; \quad 
 ((\partial+\partial^{*})f)(z)=(n-2\wt(z))f(z)
   \end{equation}
which is the operator whose compression is the tridiagonal matrix below.

\emph{Zonal vectors.} For $x\in X_n$ put
\begin{equation}\label{eq: zonal}
 v_{i,x}(z) = D_i^{-1/2}\sum_{|S|=i} \chi_S(x)\chi_S(z) \in V_i,
 \qquad \|v_{i,x}\|=1,\qquad v_{i,x}=R_xv_{i,\mathbf 1},
\end{equation}
where $R_x\chi_S=\chi_S(x)\chi_S$ is multiplication by the sign character.
Each $v_{i,x}$ spans a one-dimensional subspace, namely the line in $V_i$ fixed by $\Stab(x)$.

\vspace*{.1in}
\emph{Unwrapping the multiplier: Coordinate maps.} The proof in \cite{BN06} yields a general bound on the code size that involves the Perron eigenvalue of a tridiagonal matrix ($J_{i,i+1}^{(0)}$ below), which arises from a certain recurrence\footnote{We deliberately hide the details which do not matter here.} generated by the multiplication of the functions by $n-2\wt(z)$, where $\wt(\cdot)$ is the number of $-1$ entries in the argument. By \eqref{eq: partial}, this multiplication sums over the
coordinates; to record this action for each coordinate, define the mapping
  \begin{gather*}
 \nabla\colon\ \R^{X_n}\longrightarrow W\otimes\R^{X_n}, \qquad
 \nabla f=\Big(\frac1{\sqrt n}\sum_r e_r\otimes\chi_r\Big)f\\
(\nabla f)(z)
 =\Bigl[\frac1{\sqrt n}\sum_{r=1}^{n}e_r\otimes\chi_rf\Bigr](z)
=\frac1{\sqrt n}\sum_{r=1}^{n}(\chi_rf)(z)\,e_r
 =f(z)\,\ell_z 
 \end{gather*}
 
Since each coordinate function takes values $\pm1$, multiplication by $\chi_r$ is norm-preserving, and so $\nabla$ is an isometry. Let
\[
 (\ell_x\otimes\mathrm{id})^{*}\colon\ W\otimes\R^{X_n}\to\R^{X_n},
 \qquad
 (\ell_x\otimes\mathrm{id})^{*}(w\otimes f)=\la\ell_x,w\ra\,f
\]
be the contraction (partial inner product) in the first factor: it pairs the $W$-slot with
$\ell_x$ and leaves the function unchanged. Applying it to $\nabla$ gives the multiplier based at $x$:
\[
 (\ell_x\otimes\mathrm{id})^{*}\,\nabla f=t(x,\cdot)\,f .
\]
The reason for including the factor $W$ is that products $t(x,y)\la u,u'\ra$ are realized as inner products of vectors attached to $x$ and $y$:
\[
 \la\,\ell_x\otimes u,\ \ell_y\otimes u'\,\ra=t(x,y)\,\la u,u'\ra ,
\]
which is what will make the kernels below positive definite.

 Multiplication by $\sum_r\chi_r =n-2\wt(\cdot)$ results in a sum over the coordinates $1$ to $n$; we now keep the summands separate, recording the action of each
$z_r$ in its own `slot' of $W$. Throughout, we rely on the indexing
\[
 C_{i,j}\colon\ V_j\longrightarrow W\otimes V_i ,
\]
where the first index refers to the level of the target space. This mapping is
simpler than it looks because $\chi_r\chi_S=\chi_{S\triangle{r}}$ raises or lowers
the level by one, so only $j=i\pm1$ can ever occur.

\begin{lemma}[coordinate maps and the splitting of $\nabla$]\label{lem:coordmaps}
For $0\le i\le n$ define $C_{i,i+1}\colon V_{i+1}\to W\otimes V_i$ and
$C_{i,i-1}\colon V_{i-1}\to W\otimes V_i$ acting on basis vectors as
\begin{equation}\label{eq:Cdef}
\begin{aligned}
 C_{i,i+1}\chi_T&=\frac1{\sqrt{i+1}}\sum_{r\in T}e_r\otimes\chi_{T\setminus\{r\}}
 \quad(|T|=i+1),\\
  C_{i,i-1}\chi_{T}&=\frac1{\sqrt{n-i+1}}\sum_{r\notin T}e_r\otimes\chi_{T\cup\{r\}}
 \quad(|T|=i-1).
 \end{aligned}
\end{equation}
Then:
\begin{itemize}
\item[(i)] both maps are isometries, and their ranges in $W\otimes V_i$ are
orthogonal;
\item[(ii)]  for $f\in V_i$, the adjoints of these maps are given by\footnote{Formula \eqref{eq:Cadj} controls the way a function $f\in V_i$ is moved up or down one level. Summing it termwise over $\ell_x\otimes f=\sum_r\frac{x_r}{\sqrt n}\,e_r\otimes f$ gives rise to the transition coefficients below. It also ensures that multiplication by $t(x,y)$ appears separately as inner products for $x$ and $y$, giving rise to the Gram factorization and yielding positive definiteness in Cor.~\ref{cor:posdef}
.}
\begin{equation}\label{eq:Cadj}
 C^{*}_{i,i+1}(e_r\otimes f)=\frac{(\chi_rf)_{i+1}}{\sqrt{i+1}},
 \qquad
 C^{*}_{i,i-1}(e_r\otimes f)=\frac{(\chi_rf)_{i-1}}{\sqrt{n-i+1}},
\end{equation}
where $(\,\cdot\,)_j$ denotes the component in $V_j$;
\item[(iii)] $\nabla(V_i)\subset W\otimes(V_{i-1}\oplus V_{i+1})$ and
it splits into a lowering and a raising half\footnote{
The two square roots in \eqref{eq:nablasplit} are the \emph{masses} of the two terms in the recurrence. For a unit $f\in V_i$ the vector $\nabla f$ splits into orthogonal pieces of squared
norms
\[
 \Bigl\|\sqrt{\tfrac in}\,C_{i-1,i}f\Bigr\|^{2}=\frac in,
 \qquad
 \Bigl\|\sqrt{\tfrac{n-i}{n}}\,C_{i+1,i}f\Bigr\|^{2}=\frac{n-i}{n},
\]
because the $C$'s are isometries; the numbers $i$ and $n-i$ count the
coordinates that lower and raise the level, and the $\sqrt n$ is the
normalization of $\nabla$ itself. In particular
$\|\nabla f\|^2=\frac in+\frac{n-i}{n}=1$, confirming again that $\nabla$ is an
isometry, and the two masses are the transition probabilities $p_{i,i-1}$ and
$p_{i,i+1}$ that appear below.},
\begin{equation}\label{eq:nablasplit}
 \nabla|_{V_i}
 =\sqrt{\tfrac in}\;C_{i-1,i}\;+\;\sqrt{\tfrac{n-i}{n}}\;C_{i+1,i}.
\end{equation}
\end{itemize}
\end{lemma}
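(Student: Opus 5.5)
All three parts are finite computations on the orthonormal basis $\{e_r\otimes\chi_S\}$ of $W\otimes\R^{X_n}$. Here $\la e_r\otimes\chi_S, e_{r'}\otimes\chi_{S'}\ra=\delta_{rr'}\delta_{SS'}$ because the characters are orthonormal for the normalized inner product. The plan is to reduce every claim to counting the pairs $(r,S)$ that occur.

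\emph{Step 1 (isometry and orthogonality).} For $|T|=i+1$, the vector $\sqrt{i+1}\,C_{i,i+1}\chi_T$ is a sum of $i+1$ distinct basis vectors $e_r\otimes\chi_{T\setminus\{r\}}$, so $\|C_{i,i+1}\chi_T\|=1$. For $T\ne T'$ the images share no basis vector, since the pair $(r,T\setminus\{r\})$ with $r\in T$ determines $T$. Hence $C_{i,i+1}$ maps an orthonormal basis to an orthonormal system and is an isometry. The same argument handles $C_{i,i-1}$: it has $n-i+1$ terms, and $(r,T\cup\{r\})$ with $r\notin T$ determines $T$. For the orthogonality of the ranges, a basis vector $e_r\otimes\chi_S$ with $|S|=i$ occurs in the range of $C_{i,i+1}$ only if $r\notin S$ (since $S=T\setminus\{r\}$), and in the range of $C_{i,i-1}$ only if $r\in S$. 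The supports are disjoint, so the ranges are orthogonal.

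\emph{Step 2 (adjoints).} It suffices to check \eqref{eq:Cadj} on $f=\chi_S$ with $|S|=i$, by pairing against the basis vectors $\chi_T$ of $V_{i+1}$ (resp.\ $V_{i-1}$). We have $\la C_{i,i+1}\chi_T, e_r\otimes\chi_S\ra=(i+1)^{-1/2}$ if $r\in T$ and $T\setminus\{r\}=S$, that is, if $r\notin S$ and $T=S\cup\{r\}$, and $0$ otherwise. Hence $C^*_{i,i+1}(e_r\otimes\chi_S)=(i+1)^{-1/2}\chi_{S\cup\{r\}}\,[r\notin S]$. Since $\chi_r\chi_S=\chi_{S\triangle\{r\}}$ has level $i+1$ exactly when $r\notin S$, this equals $(\chi_r\chi_S)_{i+1}/\sqrt{i+1}$. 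The lowering adjoint is symmetric: it gives $(n-i+1)^{-1/2}\chi_{S\setminus\{r\}}[r\in S]=(\chi_r\chi_S)_{i-1}/\sqrt{n-i+1}$. Both sides are linear in $f$, so the formulas hold on all of $V_i$.

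\emph{Step 3 (splitting of $\nabla$).} For $|S|=i$, split the sum defining $\nabla\chi_S=n^{-1/2}\sum_r e_r\otimes\chi_{S\triangle\{r\}}$ according to whether $r\in S$ or $r\notin S$. The terms with $r\in S$ land in $W\otimes V_{i-1}$ and sum to $n^{-1/2}\sqrt{i}\,C_{i-1,i}\chi_S$, by the definition of $C_{i-1,i}$ with $T=S$ of size $i$. The terms with $r\notin S$ land in $W\otimes V_{i+1}$ and sum to $n^{-1/2}\sqrt{n-i}\,C_{i+1,i}\chi_S$, using $C_{i+1,i}$ with $|T|=i$ and normalization $\sqrt{n-(i+1)+1}=\sqrt{n-i}$. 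Linearity then gives \eqref{eq:nablasplit} and the inclusion $\nabla(V_i)\subset W\otimes(V_{i-1}\oplus V_{i+1})$.

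The only real obstacle is bookkeeping. One has to keep the indexing convention straight (the first index is the target level), so that the normalization $\sqrt{n-i+1}$ in $C_{i,i-1}$ becomes $\sqrt{n-i}$ after shifting $i\mapsto i+1$. One also has to handle the edge cases $i=0$ and $i=n$, where one of the maps has trivial domain and the corresponding term in \eqref{eq:nablasplit} vanishes together with its coefficient.
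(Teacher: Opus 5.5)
Your proof is correct and follows the paper's argument step for step. Part (i) uses the distinctness of the basis tensors $e_r\otimes\chi_S$ and the disjoint supports ($r\notin S$ versus $r\in S$), part (ii) pairs against $\chi_T$ and reads off the $V_{i\pm1}$-component of $\chi_{S\triangle\{r\}}$, and part (iii) splits $\nabla\chi_S$ by whether $r\in S$ or $r\notin S$, with the index-shifted normalizations $\sqrt i$ and $\sqrt{n-i}$; the only slip is in your edge-case remark, since in (iii) at $i=0$ or $i=n$ it is the target $W\otimes V_{-1}$ or $W\otimes V_{n+1}$ of the vanishing term that is trivial, while the trivial-domain maps are $C_{0,-1}$ and $C_{n,n+1}$ in the definition, and this changes nothing.
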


\begin{proof}
(i) The vectors $e_r\otimes\chi_S$, $r\in[n]$, $|S|=i$, form an orthonormal
basis of $W\otimes V_i$. In the first expression in \eqref{eq:Cdef} the pairs
$(r,T\setminus\{r\})$ with $r\in T$ are distinct, so the summands are
distinct basis vectors. There are $|T|=i+1$ of them, so
$\bigl\|\sum_{r\in T}e_r\otimes\chi_{T\setminus\{r\}}\bigr\|^2=i+1$,
and thus, $\|C_{i,i+1}\chi_T\|=1$. Moreover
$e_r\otimes\chi_{T\setminus\{r\}}=e_{r'}\otimes\chi_{T'\setminus\{r'\}}$
forces $r=r'$ and then $T=T'$, so images of distinct basis vectors are
orthogonal and $C_{i,i+1}$ is an isometry. The second sum has
$n-|T|=n-i+1$ terms and is proved in a similar way.

For the orthogonality of the ranges, note that $\operatorname{range}(C_{i,i+1})$
is spanned by tensors $e_r\otimes\chi_S$ with $r\notin S$, while
$\operatorname{range}(C_{i,i-1})$ is spanned by tensors with $r\in S$; the two
families are disjoint subsets of an orthonormal basis.

(ii) For $|T|=i+1$ and $|S|=i$,
\[
 \la C_{i,i+1}\chi_T,\;e_r\otimes\chi_S\ra
 =\frac1{\sqrt{i+1}}{\mathbbm 1}_{\{r\in T, S=T\setminus\{r\}\}}
 =\frac1{\sqrt{i+1}}{\mathbbm 1}_{\{r\notin S,T=S\cup\{r\}\}}
\]
Hence
\[
 C^{*}_{i,i+1}(e_r\otimes\chi_S)
 =\sum_{|T|=i+1}\la e_r\otimes\chi_S,\;C_{i,i+1}\chi_T\ra\,\chi_T=
 \dfrac{\chi_{S\cup\{r\}}}{\sqrt{i+1}}\mathbbm{1}_{\{r\notin S\}}.
 \]
As noted above, $\chi_r\chi_S=\chi_{S\triangle\{r\}}$, which lies in $V_{i+1}$ if $r\not\in S$ and
in $V_{i-1}$ if $r\in S$. Its component in $V_{i+1}$ is therefore $\chi_{S\cup\{r\}}\mathbbm{1}_{\{r\notin S\}}$, which yields the first identity in \eqref{eq:Cadj}. 
The second is obtained in the same way from the component in $V_{i-1}$ using
$\la C_{i,i-1}\chi_T,e_r\otimes\chi_S\ra=(n-i+1)^{-1/2}
{\mathbbm 1}_{\{r\in S,T=S\setminus\{r\}\}}$.

(iii) Let $|S|=i$. Write the sum defining $\nabla$ as two separate terms 
according to whether $r$ lies in $S$,
\[
 \nabla\chi_S
 =\frac1{\sqrt n}\sum_{r=1}^{n}e_r\otimes \chi_r\chi_S
 =\frac1{\sqrt n}\Bigl[\sum_{r\in S}e_r\otimes\chi_{S\setminus\{r\}}+
 \sum_{r\notin S}e_r\otimes\chi_{S\cup\{r\}}\Bigr]
,
\]
where the first sum collects the terms in $W\otimes V_{i-1}$ and the second
in $W\otimes V_{i+1}$.
Applying \eqref{eq:Cdef} with $i$ replaced by $i-1$ in the first formula
(so that the target is $W\otimes V_{i-1}$ and the normalizing factor is
$|S|^{-1/2}=i^{-1/2}$), and with $i$ replaced by $i+1$ in the second
(target $W\otimes V_{i+1}$, factor $(n-|S|)^{-1/2}=(n-i)^{-1/2}$), gives
\[
 \sum_{r\in S}e_r\otimes\chi_{S\setminus\{r\}}=\sqrt i\;C_{i-1,i}\chi_S,
 \qquad
 \sum_{r\notin S}e_r\otimes\chi_{S\cup\{r\}}=\sqrt{n-i}\;C_{i+1,i}\chi_S,
\]
which is \eqref{eq:nablasplit}.
\end{proof}

\section{The classical bound: one line per codeword}\label{sec:bn06}

In this section, we recall the spectral proof of MRRW-1 in \cite{BN06}, with a minor adjustment. 
The original proof was written in the ``radial form'', wherein the functions on the space $X^n$ are collapsed into functions
of the Hamming weight\footnote{Most classical works relied on radial functions \cite{MRRW77,KabatyanskiiLevenshtein1978} while more recent proofs of MRRW-1 \cite{FT05,NS09} also used functions on the whole space.}. 
Here we rewrite the proof, unwrapping the functions into functions on the entire space\footnote{We will present the proof in an ``LP-free'' form, without mentioning explicitly linear programming or orthogonal polynomials, which will remain in the background. While the argument in \cite{BN06} is more transparent, our goal is the proof in \cite{OAI26} which apparently cannot be written using radial functions.}. While here the unwrapping trick is not necessary, it will become essential when we move to the new result, the bound \eqref{eq:OAI improved}.
The bulk of the technical work is moved into this rewriting; once this is done, the new bound \eqref{eq:OAI improved} follows readily.

\subsection{Transition coefficients}
Fix $L<n/2$ and define
  $$
  V=\bigoplus_{i=0}^{L}V_i, \qquad D=\dim V=\sum_{i\le L}D_i.
  $$
The next lemma is the centerpiece of the derivation.

\begin{lemma}[transition coefficients]\label{lem:contract0}
For every $x$ and every $i$,
\[
 C^{*}_{i,i+1}\bigl(\ell_x\otimes v_{i,x}\bigr)=\sqrt{p_{i,i+1}}\;v_{i+1,x},
 \qquad
 C^{*}_{i,i-1}\bigl(\ell_x\otimes v_{i,x}\bigr)=\sqrt{p_{i,i-1}}\;v_{i-1,x},
\]
with
\[
 p_{i,i+1}=\frac{n-i}{n},\qquad p_{i,i-1}=\frac{i}{n},
 \qquad p_{i,i+1}+p_{i,i-1}=1 .
\]
\end{lemma}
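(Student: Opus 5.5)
The plan is to compute both contractions directly from the adjoint formulas \eqref{eq:Cadj} of Lemma~\ref{lem:coordmaps}(ii), expanding in the character basis. Alternatively, one can first reduce to $x=\mathbf 1$ using the equivariance $v_{i,x}=R_xv_{i,\mathbf 1}$; however, the direct computation is just as short, since the sign factors combine into $\chi_T(x)$ automatically.

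\textbf{Raising map.} Write $\ell_x\otimes v_{i,x}=n^{-1/2}\sum_r x_r\, e_r\otimes v_{i,x}$. By \eqref{eq:Cadj},
\[
 C^*_{i,i+1}(\ell_x\otimes v_{i,x})=\frac{1}{\sqrt{n(i+1)}}\Bigl(\sum_r x_r\chi_r v_{i,x}\Bigr)_{i+1}.
\]
Insert \eqref{eq: zonal}. For $|S|=i$ we have $x_r\chi_r\chi_S(x)\chi_S=\chi_{S\triangle\{r\}}(x)\,\chi_{S\triangle\{r\}}$, and the $V_{i+1}$-component keeps exactly the terms with $r\notin S$, i.e.\ $T=S\cup\{r\}$. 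The key combinatorial point is that each $(i+1)$-set $T$ arises from exactly $i+1$ pairs $(S,r)$, namely one for each choice of $r\in T$. Since the coefficient $\chi_T(x)$ is the same for each such pair, the component equals
\[
 (i+1)\,D_i^{-1/2}\sum_{|T|=i+1}\chi_T(x)\chi_T=(i+1)\sqrt{D_{i+1}/D_i}\;v_{i+1,x}.
\]
The resulting scalar is $\sqrt{(i+1)/n}\,\sqrt{D_{i+1}/D_i}$. Using $D_{i+1}/D_i=\binom n{i+1}/\binom ni=(n-i)/(i+1)$, it simplifies to $\sqrt{(n-i)/n}=\sqrt{p_{i,i+1}}$.

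\textbf{Lowering map.} The same computation applies: keep the terms with $r\in S$, so $T=S\setminus\{r\}$. Each $(i-1)$-set $T$ arises from exactly $n-i+1$ pairs $(S,r)$, one for each $r\notin T$. The scalar is therefore
\[
 \frac{n-i+1}{\sqrt{n(n-i+1)}}\sqrt{D_{i-1}/D_i}.
\]
With $D_{i-1}/D_i=i/(n-i+1)$, this equals $\sqrt{i/n}=\sqrt{p_{i,i-1}}$. The relation $p_{i,i+1}+p_{i,i-1}=1$ is then immediate.

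\textbf{Main obstacle.} There is no real obstacle; the only care needed is the multiplicity count, i.e.\ that each target set is hit $i+1$ (respectively $n-i+1$) times with identical sign $\chi_T(x)$. As a cross-check, the result should be consistent with Lemma~\ref{lem:coordmaps}(iii): contracting $\nabla v_{i,x}$ against $\ell_x$ via \eqref{eq:nablasplit} gives masses $i/n$ and $(n-i)/n$, matching $p_{i,i-1}$ and $p_{i,i+1}$. The boundary cases $i=0$ and $i=n$ are covered by the same formulas, since the vanishing factor $i$ or $n-i$ kills the nonexistent level.
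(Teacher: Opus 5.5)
Your proposal is correct and follows essentially the paper's route. The paper reduces to $x=\mathbf 1$ by equivariance and identifies the contraction with $(n(i+1))^{-1/2}\partial^{*}v_{i,\mathbf 1}$; evaluating that expression is exactly your count of $i+1$ (resp.\ $n-i+1$) preimages per target set, followed by the same binomial ratio. Carrying the signs $\chi_T(x)$ through directly, as you do, differs only cosmetically, and it spares you from checking that $C^{*}_{i,i\pm1}$ intertwines the sign actions $R_x$, which the paper assumes without comment.
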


\begin{proof}
By equivariance take $x=\mathbf 1$. From the definitions,
$C^{*}_{i,i+1}(e_r\otimes\chi_S)=(i+1)^{-1/2}\chi_{S\cup\{r\}}$ for
$r\notin S$ and $0$ otherwise, so
$C^{*}_{i,i+1}(\ell\otimes v_{i,\mathbf 1})
=\bigl(n(i+1)\bigr)^{-1/2}\,\partial^{*}v_{i,\mathbf 1}$; and
$\partial^{*}v_{i,\mathbf 1}=(i+1)\sqrt{D_{i+1}/D_i}\;v_{i+1,\mathbf 1}$.
Now use $D_{i+1}/D_i=(n-i)/(i+1)$. The proof of the second identity is fully analogous.
\end{proof}

\begin{remark}\label{rem:ehrenfest}
The numbers $p_{i,i\pm1}$ are the transition probabilities of the Ehrenfest
urn chain on the levels. The dimension balance
$D_ip_{i,i+1}=D_{i+1}p_{i+1,i}$ is its detailed balance with respect to the
binomial stationary measure. 
\end{remark}

\subsection{The field of lines}

Let $J^{(0)}=J^{(0)}(n,L)$ be the 
$(L+1)\times(L+1)$ symmetric tridiagonal matrix with zero diagonal and
\[
 J^{(0)}_{i,i+1}=\sqrt{p_{i,i+1}\,p_{i+1,i}}=\frac{\sqrt{(i+1)(n-i)}}{n}.
\]

\begin{remark}[the two halves reassemble into a recurrence]\label{rem:recur}
Applying $(\ell_x\otimes\mathrm{id})^{*}$ to $\nabla$ and using
Lemma~\ref{lem:contract0} recovers multiplication by the distance
coordinate:
\[
 t(x,\cdot)\,v_{i,x}
 =\alpha_i\,v_{i+1,x}+\alpha_{i-1}\,v_{i-1,x},
 \qquad
 \alpha_i=\frac{\sqrt{(i+1)(n-i)}}{n}=J^{(0)}_{i,i+1},
\]
so $J^{(0)}$ is the matrix of this three-term recurrence, and the coordinate
maps are its raising and lowering halves. Written in the radial form, this is the classical recurrence for
Krawtchouk polynomials. For $k>0$ the two halves
no longer reassemble in this way, and it is there that the
Christoffel--Darboux structure is lost.
\end{remark}

Let $\Lambda=\lmax(J^{(0)})$ and let $\theta=(\theta_i)$ be its unit Perron eigenvector,
$\theta_i>0$. Put
\begin{equation}\label{eq: Perron}
 w_i=\sqrt{D_i}\,\theta_i,\qquad Z=\sum_{i\le L}w_i,
 \qquad\text{so that}\qquad
 \sum_{i}p_{i,j}w_i=\Lambda w_j \quad (0\le j\le L).
\end{equation}
Finally attach to each codeword the unit vector
\[
 u_x=\sum_{i=0}^{L}\sqrt{\tfrac{w_i}{Z}}\;v_{i,x}\in V,
 \]
let $P_x=u_xu_x^{T}$ be the orthogonal projector on $u_x$ and define
  \[
 K(x,y)=\tr(P_xP_y)=\la u_x,u_y\ra^{2}.
\]
A trivial but useful observation is that $K\ge 0$. 

{\em The field of lines}. The field $x\mapsto\R u_x$ is equivariant for the group $B_n$, in the sense that for the orthogonal representation $\rho$ of $B_n$ on $\R^{X_n}$ we have $\rho(g)u_x=u_{gx}$ for all $g\in B_n$. The same holds verbatim in Section~\ref{sec: quantum}, with 
$\rho$ the unitary representation of $B_n$ on $(\C^2)^{\otimes n}$ and $u_x$ replaced by the output of the PSC. Since
$B_n$ is transitive on ordered pairs of points at a given distance, it follows
that $K(x,y)$ depends only on $d_H(x,y)$ but not on the points.

\subsection{The certificate}

Positivity of the certificate relies only on the transition coefficients
and the Perron relation, and every step is agnostic of the harmonic degree, which will be reused
in Section~\ref{sec:enlarge} when we turn to the generalization. The
original certificate of \cite{BN06} is slightly different and yields a
(nonexponentially) sharper bound; it is briefly discussed in Remark~\ref{rem: LP} and in Section~\ref{sec: complements}.

\begin{lemma}[the isometry]\label{lem:B}
Define $B\colon V\to\R^n\otimes V$ by
\[
 Bh=\bigoplus_{i}\ \sum_{j:|i-j|=1}
 \sqrt{\frac{p_{i,j}\,w_i}{\Lambda\,w_j}}\;C_{i,j}h_j ,
 \qquad h=\bigoplus_jh_j .
\]
Then $B^{*}B=\mathrm{id}_V$ and $B^{*}(\ell_x\otimes u_x)=\sqrt\Lambda\,u_x$
for every $x$.
\end{lemma}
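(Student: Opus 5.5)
The plan is to verify both claims directly from the definitions, using only Lemma~\ref{lem:coordmaps}, Lemma~\ref{lem:contract0} and the Perron relation \eqref{eq: Perron}. Throughout, the indices $i,j$ range over $0,\dots,L$, and the sum over $j$ in the definition of $B$ is restricted accordingly. This truncation is exactly what the truncated matrix $J^{(0)}(n,L)$ and its Perron vector account for.

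\emph{Step 1: $B$ is an isometry.} Since $B$ is real, it suffices to show $\|Bh\|^2=\|h\|^2$ for all $h\in V$; then $B^*B=\mathrm{id}_V$ follows by polarization. The $W\otimes V_i$-components of $Bh$ are mutually orthogonal for different $i$. Within a fixed $i$, the two terms $C_{i,i+1}h_{i+1}$ and $C_{i,i-1}h_{i-1}$ lie in orthogonal ranges by Lemma~\ref{lem:coordmaps}(i). Each $C_{i,j}$ is an isometry, so
\[
 \|Bh\|^2=\sum_i\sum_{j:|i-j|=1}\frac{p_{i,j}w_i}{\Lambda w_j}\|h_j\|^2
 =\sum_j\frac{\|h_j\|^2}{\Lambda w_j}\sum_{i}p_{i,j}w_i=\sum_j\|h_j\|^2 .
\]
The last equality is the Perron relation $\sum_ip_{i,j}w_i=\Lambda w_j$, which holds for every $j\le L$ including the boundary $j=L$. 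It is here that we use that $w$ comes from the Perron vector of the \emph{truncated} matrix. Positivity $w_j>0$ ensures the coefficients are well defined.

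\emph{Step 2: the eigen-relation.} First I will write $B^*$ explicitly. Its $V_j$-component is
\[
 (B^*\xi)_j=\sum_{i:|i-j|=1}\sqrt{\frac{p_{i,j}w_i}{\Lambda w_j}}\,C_{i,j}^*\xi_i ,
\]
where $\xi_i$ is the $W\otimes V_i$-component of $\xi$. For $\xi=\ell_x\otimes u_x$ we have $\xi_i=\sqrt{w_i/Z}\,\ell_x\otimes v_{i,x}$. Lemma~\ref{lem:contract0} gives $C^*_{i,j}(\ell_x\otimes v_{i,x})=\sqrt{p_{i,j}}\,v_{j,x}$ for $j=i\pm1$. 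Hence
\[
 (B^*(\ell_x\otimes u_x))_j=\frac{\sum_i p_{i,j}w_i}{\sqrt{\Lambda w_jZ}}\,v_{j,x}
 =\sqrt\Lambda\,\sqrt{\tfrac{w_j}{Z}}\,v_{j,x},
\]
again by the Perron relation. Summing over $j$ gives $\sqrt\Lambda\,u_x$.

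The only point needing care, and the one I expect to be the main obstacle, is the index bookkeeping. One must match the convention that $C_{i,j}$ has target level $i$ with the transition probability $p_{i,j}$ appearing in Lemma~\ref{lem:contract0} (the move from level $i$ to level $j$). One must also check that the weights $\sqrt{p_{i,j}w_i/(\Lambda w_j)}$ produce the column sums $\sum_i p_{i,j}w_i$, so that the Perron relation applies in both steps. Finally, at the boundary level $L$ the term $C_{L,L+1}$ is simply absent, consistent with the truncated recurrence.
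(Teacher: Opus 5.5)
Your proposal is correct and follows essentially the same route as the paper. The orthogonal ranges and isometric $C_{i,j}$ reduce $\|Bh\|^2$ to the column sums $\sum_i p_{i,j}w_i$, which the Perron relation \eqref{eq: Perron} turns into $\Lambda w_j$, and the second claim is the same componentwise computation via Lemma~\ref{lem:contract0}. Your extra care with polarization and with the boundary level $j=L$ (where only $i=L-1$ contributes, matching the truncated $J^{(0)}$) just makes explicit what the paper leaves implicit.
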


\begin{proof}
For fixed $i$ the two summands have orthogonal ranges, so
$\|Bh\|^2=\sum_j\|h_j\|^2\,(\Lambda w_j)^{-1}\sum_ip_{i,j}w_i=\|h\|^2$ by the
Perron relation \eqref{eq: Perron}. For the second claim, the $j$-th component of
$B^{*}(\ell_x\otimes u_x)$ equals, by Lemma~\ref{lem:contract0},
\[
 \sum_i\sqrt{\frac{p_{i,j}w_i}{\Lambda w_j}}\sqrt{\frac{w_i}{Z}}\sqrt{p_{i,j}}
 \;v_{j,x}
 =\frac{1}{\sqrt{\Lambda w_jZ}}\Bigl(\sum_ip_{i,j}w_i\Bigr)v_{j,x}
 =\sqrt{\Lambda}\,\sqrt{\tfrac{w_j}{Z}}\;v_{j,x}. \qedhere
\]
\end{proof}

\begin{corollary}[positivity]\label{cor:posdef} The kernel $(t-\Lambda)K$ is positive-definite.
\end{corollary}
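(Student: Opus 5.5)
We need to show that for every finite set of points $x_1,\dots,x_m$ and real coefficients $c_1,\dots,c_m$ one has $\sum_{a,b}c_ac_b\,(t(x_a,x_b)-\Lambda)K(x_a,x_b)\ge 0$. The plan is to realize both $t(x,y)K(x,y)$ and $\Lambda K(x,y)$ as Gram kernels of vectors attached to $x$ and $y$, and then compare the two Gram matrices using Lemma~\ref{lem:B}.

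Work in the space of operators on $V$ with the Hilbert--Schmidt inner product, so that $K(x,y)=\tr(P_xP_y)=\la u_x\otimes u_x,u_y\otimes u_y\ra$. First, using the identity $\la\ell_x\otimes u,\ell_y\otimes u'\ra=t(x,y)\la u,u'\ra$, we get
\[
 t(x,y)K(x,y)=\la \ell_x\otimes u_x\otimes u_x,\ \ell_y\otimes u_y\otimes u_y\ra .
\]
Put $F=\sum_a c_a\,\ell_{x_a}\otimes u_{x_a}\otimes u_{x_a}\in \R^n\otimes V\otimes V$. Then $\sum c_ac_b\,tK=\|F\|^2$. Next, Lemma~\ref{lem:B} gives $(B^*\otimes\mathrm{id})(\ell_x\otimes u_x\otimes u_x)=\sqrt\Lambda\,u_x\otimes u_x$. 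Hence
\[
 (B^*\otimes\mathrm{id})F=\sqrt\Lambda\sum_a c_a\,u_{x_a}\otimes u_{x_a},
\]
and its squared norm equals $\Lambda\sum c_ac_bK(x_a,x_b)$.

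Since $B^*B=\mathrm{id}_V$, $B$ is an isometry, so $B^*$ is a contraction: $\|B^*g\|\le\|g\|$, because $BB^*$ is an orthogonal projection. The same holds for $B^*\otimes\mathrm{id}$. Therefore $\Lambda\sum c_ac_bK\le\|F\|^2=\sum c_ac_b\,tK$, which is the claimed positive semidefiniteness.

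The only delicate point is to tensor correctly. The contraction must act on the first copy of $u_x$ while the second copy is carried along untouched. That is exactly why we use $K=\la u_x,u_y\ra^2$ and not $\la u_x,u_y\ra$: with a single copy we would get $(t-\Lambda)\la u_x,u_y\ra$, whose positivity is the same argument, but the square is needed later. We should also note that "positive-definite" here means positive semidefinite as a kernel; strictness is not needed downstream.
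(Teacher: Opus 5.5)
Your proof is correct and is essentially the paper's argument. The paper's Gram vectors $\Theta_x=(\ell_x\otimes\mathrm{id})P_x-\sqrt\Lambda\,BP_x$ are, under the identification $P_x\leftrightarrow u_x\otimes u_x$, exactly $\bigl((\mathrm{id}-BB^{*})\otimes\mathrm{id}\bigr)(\ell_x\otimes u_x\otimes u_x)$, so its identity $(t-\Lambda)K(x,y)=\la\Theta_x,\Theta_y\ra_{\mathrm{HS}}$ is your inequality $\|F\|^2-\|(B^{*}\otimes\mathrm{id})F\|^2\ge 0$ written as an explicit Gram factorization.
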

\begin{proof} Denote $\Theta_x:=(\ell_x\otimes\mathrm{id})P_x-\sqrt\Lambda\,BP_x$, then
\[
 \bigl(t(x,y)-\Lambda\bigr)K(x,y)=\la\Theta_x,\Theta_y\ra_{\mathrm {HS}}. 
\]
Indeed, 
\begin{gather*}
\la (\ell_x\otimes\mathrm{id})P_x,(\ell_y\otimes\mathrm{id})P_y\ra_{\mathrm {HS}}=t(x,y) K(x,y),\\
\la BP_x, BP_y\ra_{\mathrm {HS}}=K(x,y),\qquad\la (\ell_x\otimes\mathrm{id})P_x,BP_y\ra_{\mathrm{HS}}=
\sqrt\Lambda K(x,y)
\end{gather*}
where the last two equalities follow by Lemma~\ref{lem:B}.
\end{proof}

\subsection{The bound}

\begin{theorem}[\cite{BN06} in full-space form]\label{thm:bn06}
Let $C\subseteq X_n$ have minimum distance at least $\delta n$, and suppose
$\Lambda>s$, where $s=1-2\delta$; Eq.~\eqref{eq: distance}. Then
\[
 |C|\;\le\;\frac{1-s}{\Lambda-s}\;D .
\]
\end{theorem}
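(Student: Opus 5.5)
The plan is the standard Delsarte-type double counting, applied to the kernel $F(x,y)=(t(x,y)-\Lambda)K(x,y)$.

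\textbf{Lower bound.} By Corollary~\ref{cor:posdef}, $F$ is positive semidefinite. Summing over $C\times C$ gives
\[
 \sum_{x,y\in C}\bigl(t(x,y)-\Lambda\bigr)K(x,y)=\Bigl\|\sum_{x\in C}\Theta_x\Bigr\|_{\mathrm{HS}}^2\ge 0 .
\]

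\textbf{Upper bound.} We split the double sum into the diagonal and the off-diagonal parts.
\begin{itemize}
\item On the diagonal, $t(x,x)=1$ and $K(x,x)=\|u_x\|^4=1$. The diagonal therefore contributes exactly $(1-\Lambda)|C|$.
\item Off the diagonal, $d_H(x,y)\ge\delta n$ gives $t(x,y)\le s$. Since $K\ge0$, each term satisfies $(t-\Lambda)K\le(s-\Lambda)K$, and $s-\Lambda<0$ by hypothesis.
\end{itemize}
This yields
\[
 0\le(1-\Lambda)|C|+(s-\Lambda)\sum_{x\ne y}K(x,y)
 =(1-s)|C|+(s-\Lambda)\sum_{x,y\in C}K(x,y),
\]
where we added back the diagonal $\sum_x K(x,x)=|C|$ with coefficient $s-\Lambda$.

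\textbf{Lower bound on $\sum K$.} It remains to bound $\sum_{x,y\in C}K(x,y)$ from below by $|C|^2/D$. Write
\[
 \sum_{x,y}K(x,y)=\sum_{x,y}\tr(P_xP_y)=\|P\|_{\mathrm{HS}}^2,\qquad P=\sum_{x\in C}P_x,
\]
where $P$ is a positive operator on $V$ of dimension $D$ with $\tr P=|C|$. Cauchy--Schwarz in the Hilbert--Schmidt inner product against $\mathrm{id}_V$ gives
\[
 |C|^2=(\tr P)^2\le D\,\|P\|_{\mathrm{HS}}^2 .
\]
This is the step where the ``packing in $V$'' appears: $|C|$ rank-one projectors cannot all be nearly orthogonal inside a $D$-dimensional space.

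\textbf{Conclusion.} Substituting into the previous inequality gives
\[
 0\le(1-s)|C|-(\Lambda-s)\,\frac{|C|^2}{D}.
\]
Dividing by $|C|(\Lambda-s)>0$ gives $|C|\le\frac{1-s}{\Lambda-s}\,D$, as claimed.

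There is no real obstacle, since all the analytic content sits in Corollary~\ref{cor:posdef}. The one point needing care is the order of operations. Nonnegativity of $K$ must be used before replacing $t$ by $s$ off the diagonal, so that the inequality points the right way. Only after that is the Cauchy--Schwarz lower bound on $\sum K$ applied, and it needs the coefficient $s-\Lambda$ to be negative, which is exactly the hypothesis $\Lambda>s$.
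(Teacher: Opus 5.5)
Your proof is correct and is essentially the paper's argument. It uses positivity of $(t-\Lambda)K$ from Corollary~\ref{cor:posdef}, $K\ge 0$ together with $t\le s$ off the diagonal, and the Hilbert--Schmidt Cauchy--Schwarz bound $\sum_{x,y\in C}K\ge |C|^2/D$, which is exactly the combined form \eqref{eq: two steps} of the paper's steps \eqref{eq:mean}--\eqref{eq:CS}; you split $(t-\Lambda)K$ where the paper splits $(t-s)K$, and the paper's detour through $\widehat K_0$ only matters for the sharper $Z^2$ bound of Remark~\ref{rem: LP}, which this statement does not need.
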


\begin{proof}
We have
  \begin{equation}\label{eq:diag}
  \sum_{x,y\in C}(t(x,y)-s)K(x,y)\le \sum_x(t(x,x)-s)K(x,x)=|C|(1-s)
  \end{equation}
Further, let $\widehat K_0= \E_{x,y}K(x,y)$ be the average of $K$ on $X_n\times X_n$, then\footnote{The computation in \eqref{eq:mean} is the Delsarte inequality about the mean in the form
used by Kabatiansky and Levenshtein \cite{KabatyanskiiLevenshtein1978}.}
  \begin{equation}\label{eq:mean}
 \sum_{x,y\in C}K(x,y)
 =\sum_{S}\widehat K(S)\Bigl|\sum_{x\in C}\chi_S(x)\Bigr|^{2}
 \;\ge\;\widehat K_0\,|C|^{2}.
\end{equation}
The last inequality uses $\hat K(S)\ge 0$ for all $S$.
Let $\Pi=\E_x[P_x]$, then 
$\widehat K_0=\tr(\Pi^2)$, and since $\tr\Pi=1$ and $\Pi$ has rank at most
$D$,
\begin{equation}\label{eq:CS}
 \widehat K_0=\tr(\Pi^{2})\stackrel{\text{C.-S.}}{\ge}\frac{(\tr\Pi)^{2}}{D}=\frac1D .
\end{equation}
Finally, using Cor.\ref{cor:posdef} and \eqref{eq:CS}
    \begin{align}\label{eq: proj}
    \sum_{x,y\in C}(t-s)K&=\sum_{x,y\in C}(t-\Lambda)K+\sum_{x,y\in C}(\Lambda-s)K 
    \ge (\Lambda-s)\frac {|C|^2}D.
    \end{align}
Combining \eqref{eq:diag}-\eqref{eq: proj} yields the result.
\end{proof}
\begin{remark}
Steps \eqref{eq:mean}-\eqref{eq:CS} can be jointly written as
  \begin{equation}\label{eq: two steps}
   \sum_{x,y\in C}K=\|\sum_xP_x\|_{HS}^2\ge\Big(\tr\sum_xP_x\Big)^2/D.
   \end{equation}
The proof of this theorem can be simplified for the case of lines, but in this form it generalizes directly to the case of subspaces.    
\end{remark}

\begin{remark}\label{rem: LP}
Using Cauchy-Schwarz in \eqref{eq:CS} entails a loss, which can be avoided by evaluating $\tr(\Pi^{2})$ rather than bounding it. We have ${\mathbb E}_x[v_{i,x}v_{j,x}^T]=D_i^{-1}{\mathrm id}_{V_i}{\mathbbm 1}_{\{i= j\}}$ and thus $\widehat K_0=\tr{\Pi^2}=Z^{-2}$, where $Z$ is defined in \eqref{eq: Perron}. Since $Z\le \sqrt{D}$, this gives a slightly (non-exponentially) sharper inequality $|C|\le\frac{1-s}{\Lambda -s}Z^2$, which is the bound in \cite{BN06}.
The same computation in Section~\ref{sec:enlarge} gives
$\operatorname{tr}\Pi=d_E$, $\widehat K_0=d_E^{2}/Z^{2}$ and
$|C|\le\frac{1-s}{\Lambda-s}\cdot\frac{Z^{2}}{d_E}$.
\end{remark}

\begin{remark}[$J^{(0)}$ is the adjacency matrix of a ball]\label{rem:ball}
From a vertex of weight $j$ the cube has $n-j$ neighbours of weight $j+1$
and $j$ of weight $j-1$. These up- and down-degrees satisfy the detailed
balance $\binom nj(n-j)=\binom n{j+1}(j+1)$ with respect to the binomial
measure, so the associated symmetric matrix has entries
$\sqrt{(j+1)(n-j)}$, and
\[
 n\,J^{(0)}_{j,j+1}=\sqrt{(j+1)(n-j)} .
\]
Thus $nJ^{(0)}$ is exactly the symmetrized quotient --- the radialization ---
of the adjacency matrix of the induced subgraph on the ball
$B(L)=\{z\colon \wt(z)\le L\}$; equivalently, of the radial adjacency of the
cube truncated to the first $L+1$ weights.
\end{remark}

\section{The improved bound : one subspace per codeword}\label{sec:enlarge}
For $0\le k\le n/2$ define the subspace\footnote{
$\cH_k=\{\sum_{|S|=k} a_S\chi_S:\sum_{S\supseteq T}a_S=0\}$ for every $T$ with $|T|=k-1$ \cite[Sec.II]{Schrijver05}. This is the space of higher-degree Boolean harmonics, which can be described in representation-theoretic terms, but we will not need this here.} 
\[
 \cH_k=\ker\bigl(\partial\colon V_k\to V_{k-1}\bigr), 
 \qquad
 d_E:=\dim\cH_k=\binom nk-\binom n{k-1},
\]
embedded isometrically in each Fourier level $V_i, k\le i\le n-k$ by
  $$
     \varphi_i h=\frac{(\partial^{*})^{\,i-k}h}
 {(i-k)! \binom{n-2k}{i-k}^{1/2}}=\binom{n-2k}{i-k}^{-1/2}\sum_{|T|=i}\Big(\sum_{S\subseteq T, |S|=k}a_S\Big)\chi_T
  $$
where $h=\sum_{|S|=k} a_S\chi_S$, and
  $$
  \varphi_{i,x}=R_x\varphi_i,\quad R_x\chi_S:=\chi_S(x)\chi_S .
  $$
The classical case of Sec.~\ref{sec:bn06} corresponds to $k=0$, then the space $\cH_0$ is the constants and 
$\varphi_{i,x}$ is
the zonal vector $v_{i,x}$.

\subsection{The tridiagonal matrix, again}
Let $Q_i=(i-k+1)(n-i-k), k\le i<n/2$. We have
  $$
  \partial^{*}\varphi_ih=\sqrt{Q_i}\,\varphi_{i+1}h, \qquad
\partial\varphi_{i+1}h=\sqrt{Q_i}\,\varphi_ih,
$$
and Lemma~\ref{lem:contract0}
becomes
\[
 C^{*}_{i,i+1}\bigl(\ell_x\otimes\varphi_{i,x}h\bigr)=\sqrt{p_{i,i+1}}\,\varphi_{i+1,x}h,
 \qquad
 C^{*}_{i,i-1}\bigl(\ell_x\otimes\varphi_{i,x}h\bigr)=\sqrt{p_{i,i-1}}\,\varphi_{i-1,x}h,
\]
now\footnote{The new feature that arises is that the transition coefficients no longer sum to one (are ``sub-stochastic''), so the associated Markov chain is no longer conservative, as opposed to the case $k=0$; cf. Lemma~\ref{lem:contract0}. This is the meaning of the remark about coordinate multiplication reaching ``representations outside the selected path'' in \cite[Sec.2.1]{OAI26}.}
   \begin{gather}
 p_{i,i+1}=\frac{Q_i}{n(i+1)},\qquad
 p_{i,i-1}=\frac{Q_{i-1}}{n(n-i+1)},
 \\
 D_i\,p_{i,i+1}=D_{i+1}\,p_{i+1,i},
 \\
 p_{i,i+1}+p_{i,i-1}<1\ \ (k>0). \label{eq: nonstochastic}
 \end{gather}
The matrix $J^{(k)}=J^{(k)}(n,k,L)$ has zero diagonal and
\[
 J^{(k)}_{i,i+1}=\sqrt{p_{i,i+1}p_{i+1,i}}
 =\frac{(i-k+1)(n-i-k)}{n\sqrt{(i+1)(n-i)}},
 \qquad i=k,\dots,L-1 .
\]

\noindent Observe that the case $k=0$ recovers the matrix $J_{i,i+1}^{(0)}$. 

\subsection{The moving subspace and the bound}

Let $\theta$ be the unit Perron eigenvector of $J^{(k)}$, $\Lambda=\lmax(J^{(k)})$,
and, exactly as before in \eqref{eq: Perron} and Lemma~\ref{lem:B},
   \begin{gather}\nonumber
        w_i=\sqrt{D_i}\,\theta_i,\quad Z=\sum_{i=k}^{L}w_i,\\[-.1in]
 \Psi_xh=\bigoplus_{i=k}^{L}\sqrt{\tfrac{w_i}{Z}}\,\varphi_{i,x}h,
 \qquad
 E_x=\operatorname{im}\Psi_x,\quad P_x=\Psi_x\Psi_x^{*}. \label{eq: tower}
   \end{gather}
Thus $\Psi_x\colon\cH_k\to V$ and $\dim E_x=d_E$, where now
\begin{equation}\label{eq: V}
 V=\bigoplus_{i=k}^{L}V_i,\qquad D=\dim V=\sum_{i=k}^{L}D_i .
\end{equation}
The ambient space depends on $k$ only through the lower endpoint of the range,
and since $L\le n/2$ the sum is dominated by its top term, so discarding the
levels below $k$ changes $D$ by a subexponential factor only.

{\em The field of subspaces.} What is attached to $x$ is now the
$d_E$-dimensional subspace $E_x=\operatorname{im}\Psi_x\subset V$ rather than a
line, and it moves with the codeword: with $\rho$ as in
Section~\ref{sec:bn06},
\[
 \rho(g)E_x=E_{gx},\qquad\text{equivalently}\qquad
 P_{gx}=\rho(g)P_x\rho(g)^{*}\qquad(g\in B_n).
\]

\graybox{Equivariance is checked as before except that it now forms a statement about the subspace
rather than an individual vector, the
permutation part acting on $E_x$ through the nontrivial $\Stab(x)$-representation 
$\cH_k$ (the source of the local index $b$); this is the precise content of the slogan
``replace lines with moving subspaces''.  Consequently, $K(x,y)=\tr(P_xP_y)$ is $B_n$-invariant and thus a function of
$d_H(x,y)$ alone, and the certificate stays {\em two-point} and {\em scalar}. This is the nexus of the new contribution in \cite{OAI26}, and it deserves being highlighted. }

\begin{theorem}[the projection bound]\label{thm:proj}
If $\Lambda>s$ then every code with minimum distance at least $\delta n$
satisfies
  \begin{equation}\label{eq: finite bound}
 |C|\;\le\;\frac{1-s}{\Lambda-s}\cdot\frac{D}{d_E},
 \qquad D:=\sum_{i=k}^{L}D_i .
 \end{equation}
\end{theorem}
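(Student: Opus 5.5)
The plan is to repeat the proof of Theorem~\ref{thm:bn06} verbatim, with the rank-one projector $u_xu_x^T$ replaced by the rank-$d_E$ projector $P_x=\Psi_x\Psi_x^*$. Set $K(x,y)=\tr(P_xP_y)\ge0$.

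The first step is to check that $\Psi_x$ is an isometry, so that $P_x$ really is an orthogonal projector of rank $d_E$. This holds because each $\varphi_{i,x}$ is an isometry (the $\varphi_i$ are isometric embeddings and $R_x$ is orthogonal), the levels $V_i$ are mutually orthogonal, and $\sum_i w_i/Z$ should equal $1$. On that last point, the normalization $Z=\sum w_i$ in \eqref{eq: tower} gives $\sum_i w_i/Z=1$, so $\Psi_x^*\Psi_x=\mathrm{id}_{\cH_k}$. Consequently $K(x,x)=d_E$ and $\tr\Pi=d_E$, where $\Pi=\E_x P_x$.

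The second step is the analogue of Lemma~\ref{lem:B}. Define $B$ by the same formula, now with the $p_{i,j}$, $w_i$ and $\Lambda$ taken from $J^{(k)}$, and with indices restricted to $k\le i,j\le L$.
\begin{itemize}
\item \textbf{Isometry.} The computation $\|Bh\|^2=\sum_j\|h_j\|^2(\Lambda w_j)^{-1}\sum_ip_{i,j}w_i$ uses only the orthogonality of the ranges from Lemma~\ref{lem:coordmaps}(i) and the Perron relation $\sum_ip_{i,j}w_i=\Lambda w_j$. That relation follows from $J^{(k)}\theta=\Lambda\theta$ together with the detailed balance $D_ip_{i,i+1}=D_{i+1}p_{i+1,i}$. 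Neither ingredient needs $p_{i,i+1}+p_{i,i-1}=1$, so the sub-stochasticity \eqref{eq: nonstochastic} is harmless here. Hence $B^*B=\mathrm{id}_V$.
\item \textbf{Intertwining.} Applying $B^*$ to $\ell_x\otimes\Psi_xh$, the generalized transition identities $C^*_{i,i\pm1}(\ell_x\otimes\varphi_{i,x}h)=\sqrt{p_{i,i\pm1}}\,\varphi_{i\pm1,x}h$ give, level by level, $B^*(\ell_x\otimes\Psi_xh)=\sqrt\Lambda\,\Psi_xh$.
\end{itemize}

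The third step is where I expect the real care to be needed, namely the positivity of $(t-\Lambda)K$ (Cor.~\ref{cor:posdef}) with the cross terms handled correctly.
\begin{itemize}
\item \textbf{The identity.} With $\Theta_x=(\ell_x\otimes\mathrm{id})P_x-\sqrt\Lambda BP_x$, one has $\la(\ell_x\otimes\mathrm{id})P_x,(\ell_y\otimes\mathrm{id})P_y\ra_{HS}=t(x,y)\tr(P_xP_y)$ and $\la BP_x,BP_y\ra_{HS}=K$ since $B^*B=\mathrm{id}$. For the cross term, $\tr\bigl(P_x^*(\ell_x\otimes\mathrm{id})^*BP_y\bigr)$ is the conjugate of $\tr\bigl(P_y B^*(\ell_x\otimes\mathrm{id})P_x\bigr)=\sqrt\Lambda\tr(P_yP_x)$, using $B^*(\ell_x\otimes\mathrm{id})P_x=\sqrt\Lambda P_x$ on the image of $P_x$. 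The other cross term is symmetric. This gives $(t-\Lambda)K=\la\Theta_x,\Theta_y\ra_{HS}$.
\item \textbf{Boundary levels.} The intertwining at the endpoints $i=k$ and $i=L$ must be verified: the raising component out of $V_L$ and the lowering component below $V_k$ are simply discarded by $B^*$, and the identity must still hold with the truncated Perron relation. Since $\varphi_{k-1}$ does not exist (indeed $\partial$ kills $\cH_k$, so the would-be coefficient $p_{k,k-1}$ vanishes), the lower boundary should be automatic. The upper boundary is the same truncation as in the $k=0$ case.
\end{itemize}

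The final step follows Theorem~\ref{thm:bn06}, starting from the distance condition $t(x,y)\le s$ for distinct codewords.
\begin{itemize}
\item \textbf{Diagonal.} Since $K\ge0$ and $t(x,y)\le s$ off the diagonal, $\sum_{C\times C}(t-s)K\le |C|(1-s)d_E$.
\item \textbf{Off-diagonal lower bound.} By \eqref{eq: two steps}, $\sum_{C\times C}K=\|\sum_{x\in C}P_x\|_{HS}^2\ge(\tr\sum_xP_x)^2/D=|C|^2d_E^2/D$, because $\sum_xP_x$ is supported in $V$, whose dimension is $D$. Together with positive definiteness of $(t-\Lambda)K$, this gives $\sum(t-s)K\ge(\Lambda-s)|C|^2d_E^2/D$.
\item \textbf{Conclusion.} Combining the two bounds yields $|C|\le\frac{1-s}{\Lambda-s}\cdot\frac{D}{d_E}$, which is \eqref{eq: finite bound}.
\end{itemize}
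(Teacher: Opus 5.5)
Your proposal is correct and follows the paper's proof. The paper likewise reuses Lemma~\ref{lem:B} and Corollary~\ref{cor:posdef} unchanged, since they rely only on the transition identities and the Perron relation, which survive sub-stochasticity via detailed balance; it then runs the argument of Theorem~\ref{thm:bn06} with $\tr\sum_{x\in C}P_x=|C|d_E$ and $K(x,x)=d_E$. Your explicit checks of the isometry of $\Psi_x$, the cross terms, and the boundary levels (where $Q_{k-1}=0$) fill in details the paper leaves implicit.
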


\begin{proof}
We can reuse the proof of Theorem~\ref{thm:bn06} since Lemma~\ref{lem:B} and
Corollary~\ref{cor:posdef} do not depend on $k$ and therefore still hold 
(they used only the identities of Lemma~\ref{lem:contract0} and the Perron relation). The only difference is that we now use
$\tr\sum_{x\in C}P_x=|C|d_E$ and the fact that the diagonal elements of $K$ equal $d_E$ rather than 1, implying that
$\sum_{x,y\in C}K\ge|C|^2d_E^2/D$. Collecting the calculations, we have
  \begin{equation}\label{eq: mean}
  \frac{|C|^2d_E^2}D\le \sum_{x,y\in C} K(x,y)=\sum_{x,y\in C}\tr(P_x P_y)\le \frac{1-s}{\Lambda -s}|C|d_E. \qedhere
  \end{equation}
\end{proof}

\begin{remark}[packing of subspaces]\label{rem: packing}
    This theorem and in particular, Eq.~\eqref{eq: mean}, can be interpreted as a packing-of-subspaces argument. We attach a $d_E$-dimensional subspace to each code point and the ambient space has dimension $D$. Had these subspaces been pairwise orthogonal,
    their number would be bounded by $D/d_E$ (this is the argument behind the Welch bound for line packings
    and its extension to fusion frames in higher dimensions); inequalities in \eqref{eq: mean} bound
    the number of subspaces associated with points at distance $\ge d$ which are nearly orthogonal, and the
    factor $\frac{1-s}{\Lambda-s}$ accounts for the overlap.

    This remark extends to the one-dimensional case, covered in Theorem~\ref{thm:bn06}: for $k=0$ we obtain $\sum_{x,y}\la u_x,u_y\ra^2\ge |C|^2/D$, which yields the Welch bound.        
\end{remark}

\section{Asymptotics}\label{sec:asympt}

\begin{lemma}\label{lem:eig} Let $n\to\infty$ and let
$k/n\to b$ and $L/n\to a$ with $0\le b<a\le1/2$, then
\[
 \lmax\bigl(J^{(k)}(n,k,L)\bigr)\longrightarrow\Gamma_H(a,b)
 =\frac{2(A-B)}{\sqrt A},
\]
where
  \begin{equation}\label{eq: AB}
  A=a(1-a),\qquad B=b(1-b),\qquad
A-B=(a-b)(1-a-b).
  \end{equation}
\end{lemma}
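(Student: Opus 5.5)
We will show that the largest eigenvalue of $J^{(k)}$ converges to the supremum of its off-diagonal entries over the index window, and that this supremum equals $\Gamma_H(a,b)$.

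\textbf{The profile function.} Write $i=\xi n$ with $\xi\in[b,a]$. Then
\[
J^{(k)}_{i,i+1}=\frac{(i-k+1)(n-i-k)}{n\sqrt{(i+1)(n-i)}}
\longrightarrow g(\xi):=\frac{(\xi-b)(1-\xi-b)}{\sqrt{\xi(1-\xi)}},
\]
uniformly in $i$ on the window $k\le i\le L-1$, up to $O(1/n)$ errors. Note that $(\xi-b)(1-\xi-b)=\xi(1-\xi)-b(1-b)$. With $X=\xi(1-\xi)$ this gives
\[
g=\frac{X-B}{\sqrt X}=\sqrt X-\frac{B}{\sqrt X}.
\]
This is increasing in $X$, and $X$ is increasing in $\xi$ on $[0,1/2]$. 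Hence $g$ is increasing on $[b,a]$, and its maximum over the window is $g(a)=(A-B)/\sqrt A$, which is half of $\Gamma_H(a,b)$.

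\textbf{Upper bound.} A symmetric tridiagonal matrix with zero diagonal and nonnegative off-diagonal entries $\beta_i$ satisfies
\[
\lmax\le \max_i(\beta_{i-1}+\beta_i)
\]
by the Gershgorin (or row-sum) bound. By the monotonicity above, this is at most $2\max_i\beta_i\le 2g(a)+O(1/n)$. Therefore $\limsup\lmax\le\Gamma_H(a,b)$.

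\textbf{Lower bound.} We use a Rayleigh-quotient test vector. Fix $\epsilon>0$. Take $\theta$ supported on a block of $m$ consecutive indices $i\in[L-m,L-1]$, with $m\to\infty$ but $m=o(n)$, for instance $m=\sqrt n$. Use the sine profile $\theta_j=\sin(\pi j/(m+1))$ on the block.

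On the block every entry is at least $g(a-m/n)-O(1/n)=g(a)-o(1)$, by continuity of $g$. Since all entries are nonnegative, the Rayleigh quotient is at least
\[
(g(a)-o(1))\cdot 2\cos\bigl(\pi/(m+1)\bigr).
\]
This is the top eigenvalue of the path matrix of length $m$ with unit off-diagonal entries, scaled by the minimum entry. Letting $n\to\infty$ gives $\liminf\lmax\ge 2g(a)=\Gamma_H(a,b)$.

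\textbf{Main obstacle.} The main thing to handle carefully is uniformity of the entry approximation near the endpoints. When $b=0$ (or $k$ is fixed) the entries near $i=k$ have a different scaling, but they are small, so they do not affect the Gershgorin bound. The lower bound only uses indices near $L$, where $\xi$ is bounded away from $0$.

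When $a=1/2$, the quantity $n-i$ stays of order $n$, so there is no degeneracy there either. The hypothesis $b<a$ ensures the window has length of order $n$, so a block of length $m=o(n)$ fits inside it.

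The monotonicity of $g$ is what makes the maximum sit at the endpoint $a$. This is why the limit is exactly twice the boundary entry rather than an interior optimum.
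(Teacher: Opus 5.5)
Your proof is correct and is essentially the paper's argument. Both use monotonicity of the off-diagonal entries, the max-row-sum (Gershgorin/Perron--Frobenius) upper bound, and a Rayleigh quotient on the top $m=o(n)$ indices; your sine profile in place of the paper's constant vector $m^{-1/2}(1,\dots,1)$ is only cosmetic.

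One small correction: in $\max_i J^{(k)}_{i,i+1}\le g(a)+\cdots$ the error is $o(1)$ rather than $O(1/n)$, because $k/n\to b$ and $L/n\to a$ at unspecified rates. The cleanest way to see the monotonicity is the exact identity $J^{(k)}_{i,i+1}=\bigl(Y_i-k(n+1-k)\bigr)/(n\sqrt{Y_i})$ with $Y_i=(i+1)(n-i)$, which is increasing in $i$ for $i\le (n-1)/2$.
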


\begin{proof} The argument is standard (e.g., \cite[Lemma 2.3]{BN06}).
    Fix $0<k<L<n/2$. The matrix elements of $J_{i,i+1}^{(k)}$ increase on $i=k,\dots,L-1$, so the maximum is attained for $i=L-1$. By Perron-Frobenius, $\lmax$ is bounded above
    by the maximum row sum. For the lower bound, Rayleigh-Ritz gives 
    $\lmax \ge {u^T J^{(k)}u}$ for an arbitrary unit vector $u$. Taking $u$ with 
 the last $m=o(n)$ coordinates $\frac1{\sqrt m}$ and $0$ otherwise yields the matching asymptotics.
\end{proof}

\begin{proof}[Proof of the bound \eqref{eq:OAI improved}]
Let $k/n\to b, L/n\to a$ and suppose that $\Gamma_H(a,b)>1-2\delta$. By Lemma~\ref{lem:eig} the
hypothesis $\Lambda>s$ of Theorem~\ref{thm:proj} holds for all large $n$. Using standard asymptotics
of binomial coefficients, $\frac 1n\log_2\frac D{d_E}\to h_2(a)-h_2(b)$, which is exactly Eq.~\eqref{eq:OAI improved}.\qedhere

\end{proof}

\section{The probabilistic picture: Classical-quantum channels}\label{sec: quantum}
A concurrent proof of the improved bounds on binary codes  was presented in the preprint \cite{AG26}, appearing simultaneously with \cite{OAI26}. The two proofs, different as they look, are both dimension counts for packings of subspaces that move with the codeword; what differs is how the subspaces are produced and in what sense they are packed. In this section we give a brief explanation, highlighting the differences and similarities. 
All the results below are due to \cite{AG26}, although we rewrite them from a different perspective, highlighting their geometric contents.

We will construct analogs of the field of lines, the space $V$, and the field of subspaces
of Section~\ref{sec:enlarge} and point out that the ``hard objects'' of the earlier part correspond to {\em typical objects} in the probabilistic picture. 

Define the state $|\mathrm{Ber}(p)\ra=\sqrt{1-p}\,|0\ra+\sqrt p\,|1\ra$. In \cite{AG26} the pure-state channel (PSC) sends a codeword $c$ to the unit vector
    \[
 |\sigma_c\ra=X^{c}\,|\mathrm{Ber}(p)^{n}\ra ,
 \qquad
  |\mathrm{Ber}(p)^n\ra=\sum_{e\in\{0,1\}^n}\sqrt{p^{\wt(e)}(1-p)^{n-\wt(e)}} |e\ra,
   \]
i.e., it attaches to each codeword a \emph{line} 
$\C|\sigma_c\ra\subset (\C^{2})^{\otimes n}$, with rank-one projector
$P_c=|\sigma_c\ra\la\sigma_c|$, which is the counterpart of the field of lines
$\R u_x$ of Section~\ref{sec:bn06}. Since the state is a product and
   $$
   \la\mathrm{Ber}(p)|X|\mathrm{Ber}(p)\ra=2\sqrt{p(1-p)},
   $$
   coordinates where
$x$ and $y$ agree contribute $1$ and coordinates where they differ
contribute $\gamma:=2\sqrt{p(1-p)}$, so
\[
 \la\sigma_x|\sigma_y\ra=\gamma^{\,d(x,y)} ,
\]
which is a function of the Hamming distance between the two points. Moreover, the family $\{\C|\sigma_x\ra\}$ is an equivariant field of lines under the full isometry group $B_n$ in the sense of Section~\ref{sec:bn06}. Namely, for the representation $(\C^2)^{\otimes n}$
\begin{itemize}
    \item a translation $\epsilon_y$ acts on $X^y:=X^{y_1}\otimes\dots\otimes X^{y_n}$, and
    $X^y|\sigma_x\ra=X^y X^x|\mathrm{Ber}(p)^{n}\ra=|\sigma_{x+y}\ra$;
    \item A permutation $\tau\in S_n$ acts by the unitary $U_\tau$ that permutes the
tensor factors. It satisfies $U_\tau X^{x}U_\tau^{*}=X^{\tau x}$, and it fixes
the base state, $U_\tau|\mathrm{Ber}(p)^n\ra=|\mathrm{Ber}(p)^n\ra$, which is 
a product of $n$ identical factors. Hence
\[
 U_\tau|\sigma_x\ra=U_\tau X^{x}|\mathrm{Ber}(p)^n\ra
 =X^{\tau x}U_\tau|\mathrm{Ber}(p)^n\ra=|\sigma_{\tau x}\ra,
\]
\end{itemize}
so again $\rho(g)|\sigma_x\ra=|\sigma_{gx}\ra$.

The ambient space for this line family is identified by the spectrum of
the average output state. For a single bit $\beta\in\{0,1\}$ write
\[
 |\sigma_0\ra=|\mathrm{Ber}(p)\ra,\qquad
 |\sigma_1\ra=X|\mathrm{Ber}(p)\ra,\qquad
 \sigma_\beta=|\sigma_\beta\ra\la\sigma_\beta| ,
\]
the density matrices $\sigma_\beta$ being the case $n=1$ of the projectors $P_c$
above. 
For a uniform input the average single-letter output is
\begin{equation}\label{eq: sigmabar}
 \bar\sigma=\tfrac12\bigl(\sigma_0+\sigma_1\bigr)
 =\begin{pmatrix}\frac12&\sqrt{p(1-p)}\\[3pt]
 \sqrt{p(1-p)}&\frac12\end{pmatrix}
 =\tfrac12 I+\sqrt{p(1-p)}\,X .
\end{equation}
The state $\bar\sigma$ is a real combination of $I$ and $X$, so it is diagonal in
the eigenbasis of $X$, that is, in the Hadamard basis
$|\pm\ra=(|0\ra\pm|1\ra)/\sqrt2$:
\begin{equation}\label{eq: q}
 \bar\sigma=(1-q)\,|+\ra\la+|\;+\;q\,|-\ra\la-| ,
 \qquad \text{where }q:=\tfrac12-\sqrt{p(1-p)}\in\bigl[0,\tfrac12\bigr].
\end{equation}
Thus the output ensemble, which is not diagonal in the computational basis,
becomes diagonal after a Hadamard rotation, with a Bernoulli$(q)$ spectrum.

Passing to $n$ letters, \eqref{eq: q} gives
   \begin{gather*}
 \bar\sigma^{\otimes n}
 =\sum_{S\subseteq[n]}q^{|S|}(1-q)^{\,n-|S|}\;|h_S\ra\la h_S| ,\\ 
 |h_S\ra:=H^{\otimes n}|S\ra
 =\bigotimes_{r\in S}|-\ra\ \bigotimes_{r\notin S}|+\ra ,  \quad S\subset[n]     
   \end{gather*}
where $|S\ra$ is the computational basis vector indexed by the indicator of
$S$. Under the identification $(\C^2)^{\otimes n}\cong\C^{\{0,1\}^n}$, 
we obtain
$|h_S\ra=2^{-n/2}\sum_{c}(-1)^{\la S,c\ra}|c\ra$, so $|h_S\ra$ \emph{is} the
character $\chi_S$ of Section~\ref{sec:setup} (the factor $2^{-n/2}$ 
arises because we switch from the counting measure to the uniform one for which the
characters are unit vectors). Consequently the eigenvalue of
$\bar\sigma^{\otimes n}$ at $\chi_S$ depends on $S$ only through $|S|$, and the
eigenspaces of $\bar\sigma^{\otimes n}$ are the (complexified) Fourier levels,
\[
 \bar\sigma^{\otimes n}\big|_{V_i}=q^{i}(1-q)^{\,n-i}\cdot\mathrm{id},
 \qquad \dim V_i=D_i=\binom ni .
\]

The spectrum of $\bar\sigma^{\otimes n}$ is therefore the binomial distribution
with parameter $q$, spread over the Fourier levels, and its von Neumann entropy equals ${\sf S}(\bar\sigma)=h_2(q)$ (e.g., \cite[p.253]{Wilde17}). Its typical (high-probability) subspace is spanned by the levels of degree close to $qn$ and has dimension $2^{(h_2(q)+o(1))n}$. Setting $a=q$, this is the ambient space $V$ of Section~\ref{sec:bn06}, with a minor difference between the ball and a spherical shell close to its surface, which does not matter for the exponential asymptotics, 
so the dimension
$\dim(V)=2^{(h_2(a)+o(1))n}$. 

The parameter $q$ and the syndrome duality statement $H^{\otimes n}|\mathrm{Ber}(p)^n\ra=|\mathrm{Ber}(q)^n\ra$  form the contents of \cite[\S1.5, Prop.~5]{AG26}, so we are not adding anything to their conclusions bringing the ambient space of Section~\ref{sec:bn06} into view on the quantum side.

Moving from lines to subspaces is accomplished by moving from pure states to the mixed ones, and 
this is the step taken in \cite[\S5.1]{AG26}: the {\em mixed-qubit
channel} (MQC) is the PSC followed by the $X$-Pauli channel
$\mathcal B_\eta(\varrho)=(1-\eta)\varrho+\eta X\varrho X$, $\eta\in[0,\frac12]$.
Its single-qubit outputs are
\begin{equation}\label{eq: mqc}
 \varrho_\beta=(1-\eta)\sigma_\beta+\eta\,\sigma_{\beta\oplus1}
 =\tfrac12\Bigl[I+\gamma\,X+(-1)^{\beta}(1-2p)(1-2\eta)\,Z\Bigr],
 \qquad \beta\in\{0,1\},
\end{equation}
and at blocklength $n$ the output attached to a codeword is the mixture
\[
 \varrho_c=\E_{e\sim\mathrm{Ber}(\eta)^n}\bigl[\,P_{c\oplus e}\,\bigr]
 =\bigotimes_{r=1}^{n}\varrho_{c_r} ,
\]
the error pattern $e$ being generated and then discarded; it is this discarding
that produces the mixedness \cite[Footnote 9]{AG26}. Here $p$ is the parameter
denoted by $r$ in \eqref{eq:AG-improved}.

Two features of \eqref{eq: mqc} are worth isolating, because they reproduce the
two-tier structure of Section~\ref{sec:enlarge}. First, the average output is
unchanged,
\[
 \bar\varrho=\tfrac12(\varrho_0+\varrho_1)
 =\tfrac12 I+\sqrt{p(1-p)}\,X=\bar\sigma ,
\]
since the $Z$-terms cancel. The ambient space is therefore the same as for the
PSC, namely, the typical subspace of $\bar\sigma^{\otimes n}$, of dimension
$2^{(h_2(a)+o(1))n}=D$ with $a=q=\frac12-\sqrt{p(1-p)}$. Observe that mixing does not
move it, exactly as the ambient dimension $D$ of Section~\ref{sec:enlarge} is fixed once we choose $a$,
and is the same for every harmonic degree $k$ up to a subexponential factor.
Second, the two outputs are distinguished precisely by the $Z$-term,
whose sign flips with $\beta$. In Bloch coordinates the states \eqref{eq: mqc} have
vectors \cite[p.31]{AG26}
\[
 v_\beta=\bigl(\gamma,\,0,\,(-1)^{\beta}(1-2p)(1-2\eta)\bigr),
\]
with the common $X$-component (which is the average, hence the ambient space) and
$Z$-components of opposite signs (which is what moves with the codeword). Write $\vec\sigma:=(X,Y,Z)$. Since
$\varrho_\beta=\frac12(I+v_\beta\cdot\vec\sigma)$ has eigenvalues
$\frac12(1\pm\|v_\beta\|)$ and
\[
 1-\|v_\beta\|^{2}=1-4p(1-p)-(1-2p)^{2}(1-2\eta)^{2}=4\eta(1-\eta)(1-2p)^{2},
\]
the entropy of a bit-dependent output is ${\sf S}(\varrho_\beta)=h_2(b)$ with
\begin{equation}\label{eq: b-mqc}
 b=\tfrac12\Bigl(1-\sqrt{1-4\eta(1-\eta)(1-2p)^{2}}\Bigr) .
\end{equation}
 At $\eta=0$ one has $\|v_\beta\|=1$, the state is pure and $b=0$: this is the PSC,
and the attached object is again a line. For $\eta>0$ it is a subspace of dimension $>1$.

The eigenbases of $\varrho_0$ and $\varrho_1$ are exchanged by $X$, because
$XZX=-Z$ while $XXX=X$; neither of them is the Hadamard basis, which
diagonalizes the average $\bar\varrho$. Consequently the eigenbasis of
$\varrho_c=\bigotimes_r\varrho_{c_r}$ is obtained by acting with $X^{c}$ on the eigenbasis of 
$\varrho_0^{\otimes n}$. The 
codeword-dependent (conditional) typical subspace of $\varrho_c$ is spanned by the eigenvectors
whose eigenvalue profile is the typical one, of dimension
$2^{(h_2(b)+o(1))n}=d_E$. As we just saw, it moves with the codeword, while the ambient typical
subspace of $\bar\varrho^{\otimes n}$ remains fixed. This is the {\bf probabilistic
counterpart of the equivariant field} $x\mapsto E_x\subset V$ of
Section~\ref{sec:enlarge}, and the harmonic tower there is an exact, finite-$n$
algebraic proxy for it. The Holevo information then equals the
dimension quotient (cf. Eq.~\eqref{eq: finite bound}):
\begin{equation}\label{eq: holevo}
 \chi={\sf S}(\bar\varrho)-{\sf S}(\varrho_\beta)=h_2(a)-h_2(b)=\frac1n\,\log_2\frac{D}{d_E}+o(1).
\end{equation}

It remains to see which pair $(a,b)$ the criterion of \cite{AG26} selects, and
this is where \eqref{eq:AG-improved} comes from. The PGM bit error rate of the
MQC is $\frac12\bigl[1-(1-2p)(1-2\eta)^{2}\bigr]$ \cite[Eq.~(34)]{AG26}, and
Theorem~1 of \cite{AG26} applies as long as it stays below $\delta$; since
raising $\eta$ lowers $\chi$ and raises the error rate, the best choice is the
$\eta$ for which the error rate equals $\delta$. Therefore, we equate 
$4\eta(1-\eta)(1-2p)^{2}=2(\delta-p)(1-2p)$, so that \eqref{eq: b-mqc} becomes
\[
 a=\tfrac12-\sqrt{p(1-p)},
 \qquad
 b=\tfrac12\Bigl(1-\sqrt{1-2(\delta-p)(1-2p)}\Bigr),
\]
and \eqref{eq: holevo} turns into the expression minimized in
\eqref{eq:AG-improved}. Note that these are precisely the substitutions used in
Section~\ref{sec:result} to prove the identity \eqref{eq: HS}: what looked
there like a change of variables is the statement that the two constructions
attach subspaces of the same two dimensions. In the same vein, the observation
that the infimum in \eqref{eq:OAI improved} is attained with $b>0$ in the quantum context translates
to the statement that the optimal qubit channel is strictly mixed,
$\eta>0$, which is \cite[Prop.~25]{AG26}.

We have thus shown that both proofs, \cite{OAI26} and \cite{AG26}, rely on packing subspaces of asymptotically the same dimension. The two arguments diverge only in the sense in which the subspaces are packed: exactly, through the positive definiteness of $(t-\Lambda)K$ and a count of traces, versus probabilistically, through typicality and the strong converse for c.-q. channels (which we interpret using the packing visualization).

\begin{remark}
The subspace packing ideas extend beyond the construction above, and both
\cite{OAI26} and \cite{AG26} obtain a strict improvement of MRRW-2, recovering
that bound itself as a boundary case. The routes are different. In
\cite{OAI26} the construction is implemented on constant-weight layers, with
harmonic spaces of the Johnson scheme in place of the Boolean ones and then
using the Bassalygo--Elias inequality. The argument in 
\cite{AG26} relies on c.-q. channels: a masking operation hides a bit
$m\sim\mathrm{Ber}(\alpha)$ and reveals its parity with the input, enlarging the
output to four dimensions. The two proofs are nonetheless parallel, the mask pattern playing the role of the layer and the revealed parity carrying the
Bassalygo--Elias reduction.
\end{remark}

\section{Complements and further remarks}\label{sec: complements}

1. The passage from one vector to one subspace goes beyond a formal generalization. The natural attempt--compressing multiplication by the distance coordinate to the moving tower--does not yield a certificate, and the positivity mechanism of \cite{BN08}, a product of positive-definite functions supplied by the three-term recurrence, has no analog for $k>0$. The proof in \cite{OAI26} finds a different mechanism, a Gram factorization through operating on individual coordinates, which yields a strictly smaller eigenvalue (cf. Sec.~\ref{sec: two eigenvalues}). Higher harmonic degrees at a code point are of course familiar from the semidefinite programming bounds of  \cite{Schrijver05} and \cite{BachocVallentin08}, but there the base point is fixed, and the residual symmetry leads one to consider the Terwilliger algebra; as a result, the program becomes matrix-valued and three-point. What had not been anticipated is that letting the subspace move with the codeword keeps the certificate two-point and scalar\footnote{Of course, it was also not anticipated that the two-point Delsarte program, believed for half a century to be optimized by the MRRW certificate, is not (a computation of the LP optimum in \cite{BJ01} had suggested that MRRW bound {\em is} the LP optimum, but the amount of the improvement could not be captured by the precision available at the time of that work).}. If the account in \cite{OAI26} of how the argument was produced is accurate, this idea originated with the system.

2. We make a brief remark on the classical picture, which relies on radial functions. As a function of $z$, the zonal vector is a Krawtchouk polynomial of the
distance: since $\chi_S(x)\chi_S(z)=\chi_S(xz)$,
\[
 v_{i,x}(z)=D_i^{-1/2}\sum_{|S|=i}\chi_S(xz)=\Kt_i\bigl(d(x,z)\bigr)
\]
(normalized to $\|\Kt\|=1$). The tridiagonal matrix arising in the proof is the radial
version of the adjacency matrix of the Hamming ball $J^{(0)}$ (with a small correction at the highest level). The Perron eigenvector of this reduced matrix yields the Christoffel-Darboux kernel, and the certificate is constructed as a product of two positive definite kernels. The Perron eigenvalue has the same asymptotic behavior, and the resulting bound is exponentially unchanged (see also Remark~\ref{rem: LP}). More details appear in \cite{BN06}.

3.
Which parts of the classical argument do {\em not} generalize in the new proof? Which polynomial yields an improvement of the MRRW bounds? 

For $k>0$ the Perron eigenvector still plays a major role, although we do not know if it can be expressed in closed form. The transition coefficients are no longer conservative (mass-preserving) \eqref{eq: nonstochastic}, so there is no three-term recurrence on the retained family.

The certificate can be written as $(t-s)$ times a sum of $d_E^{2}$ squares:
\[
 F=(t-s)\tr(P_xP_y)=(t-s)\sum_{\alpha,\beta}
 \bigl|\la e^x_\alpha,e^y_\beta\ra\bigr|^{2}.
\]
 For $k\ge1$ no individual summand is Delsarte-feasible:
$\Stab(x)$ acts irreducibly and nontrivially on $E_x$, so no
equivariant orthonormal basis exists and $\la e^x_\alpha,e^y_\beta\ra$ is not a
function of $d_H(x,y)$. Furthermore, $F$ is not
$(t-s)g^{2}$ for a single polynomial $g$ of degree $\le L$, which is why it was missed 
in the variational analysis of \cite{BN08} optimizing over this family (and resulting in
the Christoffel--Darboux kernel and hence the MRRW polynomial). Whether the
variational approach extends to sums of many squares, including the
positive-definiteness constraint, seems worth a further study.
A related, interesting question is to obtain an explicit form of $F$ as a polynomial of one variable. The two indices, $i$ and $k$, suggest a relation to Dunkl's addition theorem for Krawtchouk polynomials \cite{Dunkl76}, with $i$ indexing the Krawtchouk terms and $k$ the Hahn terms of that formula.

4. 
The spectral approach applies more broadly, including the Johnson space \cite{BN06}, and its generalization implemented in \cite{OAI26} follows the same ideas as above, yielding the improvement of the asymptotic bound MRRW-2 for constant-weight codes and for general binary codes. These ideas are further applicable for the real sphere $S^{n-1}(\R)$, where the classical ``radial'' implementation stops at the Kabatiansky--Levenshtein bound, and the discussed extension due to OpenAI yields its strict asymptotic improvement.

The spectral approach in its radial form applies across many homogeneous spaces including projective spaces ${\mathbb P}L^{n-1}$, where $L=\R,  \C$, or $\mathbb H$ \cite{BN06}, the $q$-ary Hamming space for $q\ge 2$, as well as some non-distance-transitive spaces such as the Grassmann manifold \cite{Bachoc06} and the ordered (NRT) Hamming space  \cite{BP09}. One would expect that the ideas in \cite{OAI26} could yield some improvements of the bounds computed for them, although this will be technically more involved and arguably of more limited interest.

\subsection{Where a further improvement would come from}
Since the results are new, it is difficult to say whether the new directions they suggest have been conceptually exhausted, and there are certainly no proofs of any such claims. The remarks in this section should therefore be viewed as speculative possibilities rather than as the outcome of a systematic investigation. 

\subsubsection{}\label{sec: two eigenvalues} Two tridiagonal matrices are related to the moving tower \eqref{eq: tower}, namely, the matrix arising from multiplication by the distance and the matrix $J^{(k)}$. They coincide for $k=0$ but not for larger $k$. The spectral radius of the first one is found by observing that by \eqref{eq: partial}, multiplication by $t(x,\cdot)$ is $\frac1n(\partial+\partial^{*})$ conjugated by $R_x$, so the identities
$\partial^{*}\varphi_ih=\sqrt{Q_i}\,\varphi_{i+1}h$ and
$\partial\varphi_{i+1}h=\sqrt{Q_i}\,\varphi_ih$ of Section~\ref{sec:enlarge}
give
\[
 t(x,\cdot)\,\varphi_{i,x}h
 =\frac{\sqrt{Q_i}}n\,\varphi_{i+1,x}h+\frac{\sqrt{Q_{i-1}}}n\,\varphi_{i-1,x}h.
\]
This gives rise to the tridiagonal matrix $M$ with zero diagonal and $M_{i,i+1}=\sqrt{Q_i}/n$. Computing its asymptotic spectral radius by the argument used in Lemma~\ref{lem:eig}, we obtain $\lmax(M)= 2\sqrt{A-B}+o(1)$. At the same time, the certificate of Section~\ref{sec:enlarge} uses $J^{(k)}$ whose spectral radius is 
\[
 \Gamma_H(a,b)=\frac{2(A-B)}{\sqrt A}=2\sqrt{A-B}\cdot\sqrt{1-B/A} .
\]
The boundary condition in \eqref{eq:OAI improved} relies on $\Gamma_H$, so reducing the gap in the spectral radius is where a further improvement could come from.

\subsubsection{}
The subspaces $\cH_k$ used above appear in each Fourier level exactly once, which makes all the transition coefficients scalars and
the matrix $J^{(k)}$ an ordinary tridiagonal matrix. On the sphere the analogous
uniqueness holds for a wider class of subspaces attached at a point, and
\cite{OAI26} exploit this to obtain a whole hierarchy of bounds, and an analogous construction is still to be found. 
Attaching subspaces from a wider class would require enlarging the
ambient space, and the scalar weights $\sqrt{w_i/Z}$ in $\Psi_x$ would become
matrix blocks\footnote{This should not be confused with the multivariate linear programs for spaces
such as the Grassmannian \cite{Bachoc06} or the ordered Hamming space \cite{BP09}, where the relative position of two points is measured by a vector of values. Here the distance is still a single
number, and only the weights become matrices, but even then the
kernel produced is the scalar function $(t-\Lambda)\tr(P_xP_y)$ of $d_H(x,y)$, so such a construction would not leave the two-point Delsarte program} \footnote{Recent preprint \cite{gay2026honeycombframeworkcodebounds} apparently implements elements of this route within the two-point Delsarte framework and the spectral method of \cite{BN06} (they also give another proof of \eqref{eq: HS}); I have not attempted a close reading.}\cite[Remark~1.6]{OAI26}.

\vspace*{.1in}
The next two remarks concern the approach of \cite{AG26} and its interaction with the LP proof.

\subsubsection{} 
For $\ell\ge2$ let
\[
 \cR_\ell(\delta)=\inf\bigl\{\chi(\sigma_0,\sigma_1)\;:\;
 \sigma_0,\sigma_1\in\mathcal D(\C^{\ell})\ \text{output-symmetric},\
 p_e(\sigma_0,\sigma_1)<\delta\bigr\},
\]
which is a nonincreasing function of $\ell$. By \cite[Remark~5.2]{AG26} the MQC exhausts all
output-symmetric qubit pairs, so $R_{\mathrm{MQC}}$ is the best possible result for $\ell=2$; at $\ell=4$ the masked PSC recovers the second MRRW bound and the masked MQC of \cite[\S5.2]{AG26} improves it. The limit
$\lim_{\ell\to\infty}\cR_\ell(\delta)$ is the natural quantity to pursue; at this point we have no information
about it. Note also that \eqref{eq:AG-improved} uses the PGM error rate only as a {\em sufficient} criterion. The PGM decoder is bitwise optimal only for linear codes \cite[Fact 28]{AG26}, while for general codes \cite{AG26} uses a bound on its error probability, so for a fixed $\ell$ an improvement could
be pursued either through a better decoder or through the choice of the output states.

\subsubsection{}
Suppose a new output-symmetric channel gives a bound below
\eqref{eq:AG-improved}. Will this yield a new Delsarte certificate? The answer is, not immediately. Output symmetry supplies  an equivariant field of conditional typical projectors $\Pi_c$ and the kernel
$K(x,y)=\tr(\Pi_x\Pi_y)$, which is a Hilbert-Schmidt Gram matrix and hence positive definite,  and is a function of $d_H(x,y)$, with $D/d_E=2^{n\chi}$, where $\chi$ is the Holevo information. At the same time, a Delsarte certificate must be nonpositive at distances $\ge\delta n$, whereas $K\ge0$ everywhere. If we were to follow the previous constructions, the sign change has to arise from the factor $t-s$ and positive definiteness of $(t-\Lambda)K$. The channel argument relies on the error probability $p_e<\delta$ and does not provide these ingredients, including the spectral radius of a transition matrix. 
We also note that the quantum channel argument is asymptotic in nature, relying on typical subspaces, while
the Delsarte side of the argument is exact at finite $n$.

This question can also be reversed, namely would a new certificate imply new results within the classical-quantum method? There are many factors that affect the answer, so we will not attempt it here.

\vspace*{.1in}
{\sc Acknowledgment:} Claude Opus 5 was used during the preparation of this note to assist with interpreting the results of \cite{OAI26} and \cite{AG26} and to review an earlier draft.


\end{document}